\numberwithin{equation}{section}
\numberwithin{figure}{section}
\theoremstyle{plain}
\newtheorem{thm}{\protect\theoremname}[section]
\theoremstyle{plain}
\newtheorem{cor}[thm]{\protect\corollaryname}
\theoremstyle{definition}
\newtheorem{defn}[thm]{\protect\definitionname}
\theoremstyle{remark}
\newtheorem{rem}[thm]{\protect\remarkname}
\theoremstyle{plain}
\newtheorem{lem}[thm]{\protect\lemmaname}
\theoremstyle{plain}
\newtheorem{prop}[thm]{\protect\propositionname}
\DeclareSymbolFont{extraup}{U}{zavm}{m}{n}
\DeclareMathSymbol{\varheart}{\mathalpha}{extraup}{86}
\DeclareMathSymbol{\vardiamond}{\mathalpha}{extraup}{87} 
\providecommand{\corollaryname}{Corollary}
\providecommand{\definitionname}{Definition}
\providecommand{\lemmaname}{Lemma}
\providecommand{\propositionname}{Proposition}
\providecommand{\remarkname}{Remark}
\providecommand{\theoremname}{Theorem}
\begin{document}
\global\long\def\C{\mathbb{C}}%
\global\long\def\Cd{\C^{\delta}}%
\global\long\def\Cprim{\C^{\delta,\circ}}%
\global\long\def\Cdual{\C^{\delta,\bullet}}%
\global\long\def\Od{\Omega^{\delta}}%
\global\long\def\Oprim{\Omega^{\delta,\circ}}%
\global\long\def\Odual{\Omega^{\delta,\bullet}}%

\global\long\def\P{\mathsf{\mathbb{P}}}%
 
\global\long\def\E{\mathsf{\mathbb{E}}}%
 
\global\long\def\sF{\mathcal{F}}%
 
\global\long\def\ind{\mathbb{I}}%

\global\long\def\R{\mathbb{R}}%
 
\global\long\def\Z{\mathbb{Z}}%
 
\global\long\def\N{\mathbb{N}}%
 
\global\long\def\Q{\mathbb{Q}}%

\global\long\def\C{\mathbb{C}}%
 
\global\long\def\Rsphere{\overline{\C}}%
 
\global\long\def\re{\Re\mathfrak{e}}%
 
\global\long\def\im{\Im\mathfrak{m}}%
 
\global\long\def\arg{\mathrm{arg}}%
 
\global\long\def\i{\mathfrak{i}}%
\global\long\def\eps{\varepsilon}%
\global\long\def\lamb{\lambda}%
\global\long\def\lambb{\bar{\lambda}}%

\global\long\def\D{\mathbb{D}}%
 
\global\long\def\H{\mathbb{H}}%
\global\long\def\F{\mathcal{F}}%
\global\long\def\outside{\mathcal{\text{out}}}%
\global\long\def\winding{\mathrm{wind}}%

\global\long\def\dist{\mathrm{dist}}%
 
\global\long\def\reg{\mathrm{reg}}%

\global\long\def\half{\frac{1}{2}}%
 
\global\long\def\sgn{\mathrm{sgn}}%
\global\long\def\Conf{\mathrm{Conf}}%
\global\long\def\ZLT{\mathrm{Z_{LT}}}%
\global\long\def\Wind{\mathrm{winding}}%
\global\long\def\Edges{\mathrm{\mathcal{E}}}%
\global\long\def\Dual#1{#1^{\bullet}}%

\global\long\def\bdry{\partial}%
\global\long\def\pa{\partial}%
 
\global\long\def\cl#1{\overline{#1}}%
\global\long\def\ZFK{Z_{\text{FK}}}%
\global\long\def\Clusters{\mathcal{C}}%
\global\long\def\Odp{\hat{\Omega}^{\delta}}%
\global\long\def\Edges{\mathcal{E}}%
\global\long\def\Vertices{\mathcal{V}}%
\global\long\def\bcond{\mathcal{\beta}}%
\global\long\def\free{\mathcal{\mathrm{free}}}%
\global\long\def\wired{\mathrm{wired}}%
\global\long\def\Medial{\mathcal{M}}%
\global\long\def\Edgesprim{\hat{\Edges}}%
\global\long\def\Conf{\mathrm{Conf}}%

\title{On multiple SLE for the FK--Ising model}
\author{Konstantin Izyurov}
\email{konstantin.izyurov@helsinki.fi}
\address{P.O. Box 68 (Pietari Kalmin katu 5), University of Helsinki, Finland}
\begin{abstract}
We prove convergence of multiple interfaces in the critical planar
$q=2$ random cluster model, and provide an explicit description of
the scaling limit. Remarkably, the expression for the partition function
of the resulting multiple SLE$_{16/3}$ coincides with the bulk spin
correlation in the critical Ising model in the half-plane, after formally
replacing a position of each spin and its complex conjugate with a
pair of points on the real line. As a corollary, we recover Belavin--Polyakov--Zamolodchikov
equations for the spin correlations. 
\end{abstract}

\maketitle

\section{Introduction}

Schramm--Loewner evolution \cite{Schramm_SLE} provides a geometric
description of scaling limits of critical planar models of statistical
mechanics. Its importance stems for the fact that SLE is characterized
by two simple properties, namely, the \emph{conformal invariance}
and the \emph{domain Markov property}. When the random curve in question
is described by Loewner evolution, these two properties imply that
the driving process has independent, identically distributed increments.
Since it is continuous, this identifies it as a Brownian motion with
a constant drift; mild additional symmetries such as scaling outrule
the latter.

This simple characterization requires the boundary conditions to be
sufficiently simple, so that any domain (in particular, the domain
slit by the initial segment of the curve) can be conformaly mapped
onto any other domain in such a way that the boundary conditions match.
This can be achieved when there are no more than three ``marked points''
on the boundary (i. e., points where boundary conditions change),
or one on the boundary and one in the bulk. Examples include a single
loop-erased random walk curve with Dirichlet boundary conditions \cite{LSW_LERW},
harmonic explorer \cite{SS_explorer} and the level lines of the Gaussian
Free Field \cite{SS_GFF} with jump boundary condtitions, Dobrushin
\cite{ChelkakSmirnov_et_al} and plus/minus/free \cite{Hongler_Kytola,IzyurovFree}
boundary conditions in the Ising model and wired/free boundary conditions
in the FK--Ising model \cite{ChelkakSmirnov_et_al}. These exmaples
lead to chordal, radial or dipolar SLE.

When the boundary conditions are more complicated, additional insight
is needed to characterize possible laws of the driving process. On
the physical level of rigor, it is clear that the law of the initial
segments of the curves should be absolutely continuous under the change
of boundary conditions far away. Hence, in general, the law of the
driving process should be given by a Brownian motion with a (time-dependent)
drift. Moreover, the Radon-Nikodym derivative with respect to an interface
with simpler (e. g., Dobrushin-type) boundary conditions can be written
as a ratio of partition functions, from which the drift term can be
derived by Girsanov transform. This led Bauer, Bernard, and Kytölä
\cite{bauer2005multiple} to a conjecture that to each boundary conditions
in a simply-connected domain $\Omega,$ one can associate an \emph{SLE
partition function} $Z(b^{(1)},\dots,b^{(n)})$, $b^{(1)},\dots,b^{(n)}\in\R,$
so that the driving process $b_{t}^{(1)}$ describing the curve $\gamma_{t}$
starting form $b^{(1)}$ satisfies the SDE 
\[
db_{t}^{(1)}=\sqrt{\kappa}dB_{t}+\kappa\partial_{b^{(1)}}\log Z(b_{t}^{(1)},\dots,b_{t}^{(n)})dt,
\]
 where $b_{t}^{(i)}=g_{t}(\varphi(b^{(i)}))$ for $i\geq2$, $g_{t}$
are the Loewner maps, $\varphi$ is a conformal map from $\Omega$
to the upper half-plane $\H$, and $b^{(i)}\in\partial\Omega$ are
the marked points for the boundary conditions in question. Moreover,
since $Z(b_{t}^{(1)},\dots,b_{t}^{(n)})$ can be identified with a
``boundary change operator'' correlation in the corresponding conformal
field theory, the function $Z$ was conjectured to satisfy a system
of second order partial differential equations known as Belavin--Polyakov--Zamolodchikov
(BPZ) equations in Conformal Field Theory \cite{BPZ}. Alternatively,
these equations can be derived from the fact that since $Z(b_{t}^{(1)},\dots,b_{t}^{(n)})$
is supposed to be a Radon--Nikodym derivative with respect to a chordal
SLE, it should be a (conformally covariant) chordal SLE martingale.

Making this reasoning rigorous is hard, since it requires controlling
the scaling limits of partition functions, in particular, in rough
domains. However, it was discovered by Dubédat \cite{dubedat_commut}
and independently by Zhan \cite{zhan2008duality} that if each of
the marked points $b^{(1)},\dots,b^{(n)}$ has its own SLE-like interface
growing from it, then natural consistency conditions, or ``commutation
relations'', actually imply the existence of an SLE partition function
with the above properties. 

Recently, a lot of progress has been made in finding relevant solutions
to the BPZ equations, or proving that the solutions with required
properties are unique. The upshot of these results is that for $2n$
marked points on the boundary, any multiple SLE is a mixture of one
of $\frac{(2n)!}{n!(n+1)!}$ \emph{pure geometry} multiple SLE, i.
e., the ones where marked points are connected to each other in a
prescribed planar pattern. The relevant description was proven by
Flores and Kleban in \cite{FlorKleb1,FlorKleb2,FlorKleb3,FlorKleb4}.
Independently, Kytölä and Peltola \cite{Kytola_peltola_quantum,Kytola_Peltola_pure}
have given explicit expressions for the partition function of pure
geometry multiple SLE in terms of Coulomb gas integrals for all $\kappa\notin\Q$.
The restriction to $\kappa\notin\Q$ is due to the fact that representation
theory of the quantum group $\mathcal{U}_{q}(su_{2})$, $q=e^{4\pi i/\kappa}$
is used in the construction, and this theory is much more intricate
for $q$ a root of unity. 

An independent line of study, started by Lawler and Kozdron \cite{Lawler_Kozdron},
bypasses completely the theory of BPZ equations. Instead, it purports
to construct pure geometry multiple SLE using Brownian loop measures,
and then to prove that there is at most one process satisfying a natural
set of axioms. This program has been recently completed by Beffara,
Peltola and Wu \cite{beffara2018uniqueness} in the case $\kappa\in(0;4]$.
For $\kappa\in(4;6]$, they got a corresponding result conditionally
on convergence of single interface in the corresponding random-cluster
model. In particular, since this convergence was established for $\kappa=\frac{16}{3}$,
their result implies convergence of multiple interfaces in the FK
model, conditioned on the connection geometry. Their result does not
yield explicit description of the law of the curves.

The above results mostly concern the pure geometry case. On the other
hand, in the underlying lattice model, there is usually a natural
``physical'' measure on the interfaces, without restrictions on
how they connect to each other. The corresponding multiple SLE partition
function (sometimes called the full partition function) can, in principle,
be found by analysis of the space of solution to BPZ equations; however,
deriving explicit expressions at rational $\kappa$ by this method
is difficult. When the convergence of interfaces is known, usually
integrability features of the model allow one to derive explicit solution
for the ``physical'' multiple SLE, as follows:
\begin{itemize}
\item In the critical \emph{Ising model} with alternating $+/-/\dots+/-$
boundary conditions, the interfaces converge \cite{IzyurovMconn,peltola2018crossing}
to multiple SLE$_{3}$ with partition function 
\[
Z(b^{(1)},\dots,b^{(2n)})=\mathrm{Pf}\left[(b^{(i)}-b^{(j)})^{-1}\right].
\]
The Pfaffian structure of the partition function is due to the fact
that boundary condition change can be achieved by placing a fermion
on the boundary. In \cite{IzyurovFree}, this has been extended to
allow free boundary arcs, in which case there is no simple Pfaffian
structure. For the the multiply-connected case, see \cite{IzyurovMconn}. 
\item In the \emph{Gaussian free field} with alternating $+\lambda/-\lambda/\dots/+\lambda/-\lambda$
boundary conditions, the level lines are multiple SLE$_{4}$ with
the partition function 
\[
Z(b^{(1)},\dots,b^{(2n)})=\prod_{i<j}(b^{(i)}-b^{(j)})^{\frac{1}{2}(-1)^{i-j}};
\]
which is in fact also the partition function of the underlying GFF
\cite{dubedat_GFF,Peltola_Wu_MSLE}. See also \cite{hagendorf2010gaussian,IzyurovKytola}
for the doubly-connected case.
\item The branches between $2n$ boundary points in \emph{Uniform spanning
tree} with wired boundary conditions converge to multiple SLE$_{2}$
with partition function 
\[
Z(b^{(1)},\dots,b^{(2n)})=\sum_{\omega}\sgn(\omega)\det\left[(b^{(i)}-b^{\omega(j)})^{-2}\right]_{i<\omega(i);j<\omega(j)}
\]
where the sum is over all involutions $\omega:\{1,\dots,2n\}\to\{1,\dots,2n\}$
without fixed points \cite{KarrilaKytolaPeltola,Karrila_Kytola_Peltola,karrila2020ust}.
The structure of the partition function is related to the determinantal
nature of the UST and the Fomin identity \cite{Fomin}.
\end{itemize}
The main contribution of this paper is the corresponding result for
the FK--Ising model.
\begin{thm}
\label{thm: main}The interfaces in the critical FK--Ising model
with free boundary conditions on $(b^{(1)},b^{(2)})$, $(b^{(3)},b^{(4)})$,
$\dots,$ $(b^{(2n-1)},b^{(2n)})$ and wired boundary conditions on
$(b^{(2)},b^{(3)}),$ $(b^{(4)},b^{(5)})$, $\dots$, $(b^{(2n)},b^{(1)})$
converge to multiple SLE$_{16/3}$ with partition function 
\begin{equation}
Z(b^{(1)},\dots,b^{(2n)})=\prod_{k=1}^{n}(b^{(2k)}-b^{(2k-1)})^{-\frac{1}{8}}\left(\sum_{\sigma\in\{\pm1\}^{n}}\prod_{i<j}\chi_{ij}^{\frac{\sigma_{i}\sigma_{j}}{4}}\right)^{\frac{1}{2}},\label{eq: partition_function}
\end{equation}
where $\chi_{ij}=\frac{(b_{2i}-b_{2j})(b_{2i-1}-b_{2j-1})}{(b_{2i}-b_{2j-1})(b_{2i-1}-b_{2j})}.$
\end{thm}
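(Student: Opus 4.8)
The plan is to follow the martingale-observable strategy underlying the known single-interface results \cite{ChelkakSmirnov_et_al}, adapted to the presence of $2n$ marked points. Fix the interface $\gamma^{(1)}$ emanating from $b^{(1)}$, encode it by the chordal Loewner evolution in $\H$ after a conformal map $\varphi\colon\Omega\to\H$, and let $g_t$, $W_t=b_t^{(1)}$ and $b_t^{(i)}=g_t(\varphi(b^{(i)}))$ denote the Loewner maps, the driving function and the images of the remaining marked points. The first task is to write down the discrete fermionic (parafermionic) observable $F^{\delta}$ for the FK--Ising model with the prescribed alternating free/wired boundary conditions. For a single pair of marked points this is Smirnov's observable; here it must carry the correct branching/monodromy and the prescribed local behaviour near each of the $2n$ points where the boundary condition changes. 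I would normalise $F^{\delta}$ so that its evaluation (or its pairing against a fixed test point) becomes a martingale $M_t^{\delta}$ for the exploration of $\gamma^{(1)}$, with $M_0^{\delta}$ proportional to a discrete partition-function ratio.

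The technical heart is the convergence $F^{\delta}\to F$ as $\delta\to0$. Here I would invoke the Chelkak--Smirnov theory of $s$-holomorphic functions: $F^{\delta}$ satisfies discrete Cauchy--Riemann relations together with a Riemann--Hilbert-type boundary condition $F^{\delta}(z)\parallel(\mathrm{tangent})^{-1/2}$ along the boundary arcs, and has prescribed singularities at the marked points. One establishes precompactness of the family $\{F^{\delta}\}$ and identifies every subsequential limit as the unique solution of the corresponding continuous boundary value problem in the slit domain. This is the step I expect to be hardest: controlling the observable uniformly near all $2n$ marked points simultaneously (where it is singular), in possibly rough slit domains, and proving that the limiting boundary value problem has a unique solution with the right singular exponents. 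The RSW estimates available at $q=2$ supply the regularity and tightness of the discrete interfaces needed to make sense of the slit domains.

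Granting the convergence, the driving process is extracted in the standard way. Writing the continuous limit $M_t=M(W_t,b_t^{(2)},\dots,b_t^{(2n)})$ as an explicit conformally covariant function of the Loewner data, the martingale property forces the drift in It\^o's formula for $M_t$ to vanish; this identity is precisely the statement that $M$ is a chordal SLE$_{16/3}$ martingale, and reading off the $dt$-term yields
\[
dW_t=\sqrt{\kappa}\,dB_t+\kappa\,\partial_{W_t}\log Z(W_t,b_t^{(2)},\dots,b_t^{(2n)})\,dt,\qquad\kappa=\tfrac{16}{3},
\]
with $Z$ the scaling limit of the partition-function ratio. Combined with tightness of $\{\gamma^{(1),\delta}\}$ and the identification of subsequential limits through their driving functions, this gives convergence of the marked interface to the claimed multiple SLE; the joint law of all interfaces is then recovered by applying the same analysis to each marked point and using the domain Markov property of the model.

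It remains to identify $Z$ explicitly. Since $M$ is annihilated by the second-order degenerate (level-two, $\kappa=16/3$) operator acting on $Z$, the function $Z$ satisfies the BPZ/null-vector equations together with the conformal covariance and the boundary asymptotics dictated by the boundary conditions, and these pin it down. I would obtain the closed form \eqref{eq: partition_function} by solving the continuous boundary value problem for $F$ explicitly and matching it, via the Edwards--Sokal and Kramers--Wannier correspondences, with the half-plane Ising spin correlation, whose explicit form is known. Under the formal substitution $b^{(2k-1)},b^{(2k)}\mapsto\bar z_{k},z_{k}$ the cross-ratios $\chi_{ij}$ reproduce exactly the doubled-plane spin correlation, and the sum over $\sigma\in\{\pm1\}^{n}$ is the usual free-fermion $2^{n}$-term structure. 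Verifying directly that this expression satisfies the null-vector equations then both confirms \eqref{eq: partition_function} in Theorem \ref{thm: main} and yields, as a corollary, the BPZ equations for the Ising spin correlations.
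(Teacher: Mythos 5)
Your overall scheme (martingale observable, Loewner encoding, drift extraction via It\^o) is the same family of argument as the paper's, but at the two points where the paper's proof has its actual content you leave genuine holes. First, you propose to construct an FK fermionic observable with $2n$ marked points and prove its convergence from scratch via $s$-holomorphicity, and you yourself flag this as the hardest step without resolving it. The paper never does this: its central observation (Lemma \ref{lem: mart}) is that the \emph{spin-Ising} observable of \cite{IzyurovFree} --- whose scaling limit is already a theorem there --- is \emph{also} a martingale for the FK exploration process, proven by rewriting the low-temperature expansion in Kadanoff--Ceva order--disorder form and invoking the Edwards--Sokal coupling together with the domain Markov property. So the discrete complex analysis you defer is precisely what the paper is structured to avoid, and as written your proposal contains no proof of its key convergence input. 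Second, your identification of $Z$ on the grounds that the null-vector equations plus covariance and asymptotics ``pin it down'' is unjustified: $\kappa=16/3$ is rational, and uniqueness of solutions to BPZ with prescribed asymptotics is exactly the delicate issue there (this is why the Kyt\"ol\"a--Peltola construction requires $\kappa\notin\Q$). The paper goes the opposite way: the formula (\ref{eq: partition_function}) is obtained because the linear system (\ref{eq: f_linear_1})--(\ref{eq: f_linear_2}) characterizing the limiting observable is formally identical to the system for half-plane spin correlations in \cite{ChelkakHonglerIzyurov} under $(a_i,\bar a_i)\to(b^{(2i-1)},b^{(2i)})$ (Proposition \ref{prop: f_expansion}), and the BPZ equations are then a \emph{corollary} of convergence via Dub\'edat's commutation relations, not an input. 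Your fallback route (solve the boundary value problem and match with spin correlations) is the correct idea, but it is asserted rather than carried out.

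Two further steps would fail as written. (i) You apply It\^o's formula to $M(W_t,b_t^{(2)},\dots,b_t^{(2n)})$, which presupposes that the driving process $W_t$ of a subsequential limit is a semimartingale; this is not automatic and the paper needs a dedicated argument (Lemma \ref{lem: semimart}, a contour-integral representation of $b_t^{(1)}$ in terms of the observable). Relatedly, your tightness step ignores that the terminal point of $\gamma^{(1)}$ is random among the even-indexed marked points, so the Kemppainen--Smirnov conditions must be verified for the law \emph{conditioned} on the target (Lemma \ref{lem: tightness}). (ii) The SDE identification, yours and the paper's alike, is only valid up to the time the curve disconnects $b^{(1)}$'s side from the remaining marked points. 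Convergence of the \emph{full} collection of curves then requires: that the disconnection point is a.s.\ not a marked point (Lemma \ref{lem: hitting point}); that the discrete interface, once it nearly disconnects, actually closes off the discrete domain quickly with vanishing diameter, so that the resulting sub-domains and their boundary data converge in the Carath\'eodory sense and retain the regularity hypothesis (\ref{eq: bdry_reg}) (Lemma \ref{lem: stability_at_hitting}, an RSW argument); and finally an induction on $n$ through the recursive Definition \ref{def: global_msle} of global multiple SLE. Your single sentence ``apply the same analysis to each marked point and use the domain Markov property'' does not address any of this, and it is where roughly half of the paper's proof lives.
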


The mode of convergence is that the collections of full curves converge
to the corresponding global multiple SLE, see Definition \ref{def: global_msle}.
The technicalities are by now quite standard in the one-curve case,
where precompactness and similar issues have been resolved by Kemppainen
and Smirnov \cite{AnttiStas}. The multi-curve case they has been
recently systematically treated by Karrila \cite{karrila2019multiple},
who takes RSW-type bounds and the convergence in the mode of Lemma
\ref{lem: local_conv} below as inputs and explores conclusions. We
use some of his arguments, but our exposition is self-contained, only
relying on \cite{AnttiStas}.

The result of Theorem \ref{thm: main} was conjectured by Flores,
Simmons, Kleban and Ziff \cite{SimmonsFloresKlebanZiff}. Their conjecture
was based on the observation that the expression in (\ref{eq: partition_function})
formally coincides with the \emph{bulk spin correlation function}
in the Ising model on the upper half-plane when each pair of real
numbers $b^{(2i-1)},b^{(2i)}\in\R$ is replaced with a pair of complex
conjugates $a_{i},\bar{a}_{i}$. Since the spin correlations are believed
to satisfy the BPZ equations, so should (\ref{eq: partition_function}).
The spin correlations were rigorously computed in \cite{ChelkakHonglerIzyurov};
however, the author is not aware of a published proof that they do
indeed satisfy the BPZ equations (although the result was announced
in \cite{BurkhardtGuim}). We can actually derive this result from
Theorem \ref{thm: main} and Dubédat's commuting SLE theory:
\begin{cor}
The spin correlations in the scaling limit of the critical Ising model
in the half-plane, given by the formula 
\[
\prod_{k=1}^{n}(\im a_{i})^{-\frac{1}{8}}\left(\sum_{\sigma\in\{\pm1\}^{n}}\prod_{i<j}\left|\frac{a_{i}-a_{j}}{a_{i}-\bar{a}_{j}}\right|^{\frac{\sigma_{i}\sigma_{j}}{2}}\right)^{\frac{1}{2}}
\]
satisfy the BPZ equations; namely, for each $i=1,\dots,n$, they are
annihilated by 
\begin{equation}
\frac{8}{3}\frac{\partial^{2}}{(\partial a_{i})^{2}}+\sum_{j\neq i}\frac{2}{a_{j}-a_{i}}\frac{\partial}{\partial a_{j}}+\sum\frac{2}{\bar{a}_{j}-a_{i}}\frac{\partial}{\partial\bar{a}_{j}}+\sum_{j\neq i}\frac{1/8}{(a_{j}-a_{i})^{2}}+\sum\frac{1/8}{(\bar{a}_{j}-a_{i})^{2}}.\label{eq: BPZ}
\end{equation}
\end{cor}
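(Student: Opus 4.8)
\section*{Proof proposal for the Corollary}

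The plan is to deduce the Corollary from Theorem~\ref{thm: main} together with Dub\'edat's commutation theory \cite{dubedat_commut}, exactly as announced in the preceding text. Once the scaling limit has been identified as a multiple SLE$_{16/3}$ with partition function $Z$, the function $Z$ is forced to solve a system of second-order PDEs in the boundary points $b^{(1)},\dots,b^{(2n)}$; the spin-correlation formula of the Corollary is then nothing but $Z$ after the formal substitution $b^{(2i-1)}=\bar a_i$, $b^{(2i)}=a_i$, and propagating that substitution through the PDEs produces the operators in (\ref{eq: BPZ}). The only genuine analytic input, Theorem~\ref{thm: main}, is assumed, so the work is the transfer of its conclusion into a differential statement.

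First I would record what Theorem~\ref{thm: main} gives beyond mere convergence: the limiting collection of $2n$ curves is a commuting family of chordal SLE$_{16/3}$'s with common partition function $Z$, meaning that growing the curve from any marked point $b^{(m)}$ is, in the Loewner time, a chordal SLE$_{16/3}$ with drift $\kappa\,\partial_{b^{(m)}}\log Z$ as in the SDE of the introduction, and the joint law is insensitive to the order in which infinitesimal pieces of the different curves are grown (this reparametrization invariance is inherited from the discrete model, whose law does not depend on the exploration order). Dub\'edat's theorem then applies in the direction we need, namely commutation $\Rightarrow$ PDE, which is algebraic and unaffected by $\kappa=\tfrac{16}{3}>4$: the partition function is conformally covariant of weight $h=h_{1,2}(\tfrac{16}{3})=\tfrac{6-\kappa}{2\kappa}=\tfrac1{16}$ and satisfies, for every $m$, the null-vector equation
\begin{equation*}
\Bigl[\tfrac{\kappa}{2}\,\partial_{b^{(m)}}^{2}+\sum_{k\neq m}\Bigl(\tfrac{2}{b^{(k)}-b^{(m)}}\,\partial_{b^{(k)}}-\tfrac{2h}{(b^{(k)}-b^{(m)})^{2}}\Bigr)\Bigr]Z=0,\qquad \kappa=\tfrac{16}{3},\ h=\tfrac1{16},
\end{equation*}
so that $\tfrac{\kappa}{2}=\tfrac83$ and $2h=\tfrac18$. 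Since the explicit formula (\ref{eq: partition_function}) is in hand, one could alternatively bypass the commuting-SLE machinery and verify these equations by direct differentiation; the Coulomb-gas structure of $\sum_{\sigma}\prod_{i<j}\chi_{ij}^{\sigma_i\sigma_j/4}$ should make this feasible, albeit lengthy.

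Next I would carry out the substitution. Setting $b^{(2i-1)}=\bar a_i$ and $b^{(2i)}=a_i$ turns $\chi_{ij}$ into $\bigl|(a_i-a_j)/(a_i-\bar a_j)\bigr|^{2}$ and the prefactor $\prod_k(b^{(2k)}-b^{(2k-1)})^{-1/8}$ into a constant multiple of $\prod_k(\im a_k)^{-1/8}$, so $Z$ becomes the spin-correlation expression of the Corollary up to an overall multiplicative constant, which is annihilated by any of our operators. To pass the PDE along I would treat the $2n$ arguments as independent complex variables: $Z$ extends holomorphically off the real chamber $b^{(1)}<\dots<b^{(2n)}$, and the null-vector equation, being a rational-coefficient identity among $Z$ and its derivatives valid on that chamber, persists by analytic continuation to the locus $b^{(2i-1)}=\bar a_i\in\overline{\H}$, $b^{(2i)}=a_i\in\H$ (with $a_i,\bar a_i$ independent). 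The equation indexed by $b^{(2i)}$ then becomes the operator displayed in (\ref{eq: BPZ}), the $a_j$- and $\bar a_j$-groups of terms arising from the even and odd indices $m\neq 2i$ respectively; this last identification is a direct computation, tracking the signs and the constant prefactor.

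The difficulty is packed into Theorem~\ref{thm: main}. The remaining obstacle is the faithful application of Dub\'edat's theory: one must check that the global multiple SLE of Definition~\ref{def: global_msle} really admits the local, drift-$\kappa\partial\log Z$ description with this specific $Z$ as partition function. Concretely, I expect to verify that $Z$ is smooth and strictly positive on the chamber (positivity of the ordered cross-ratios $\chi_{ij}$ makes the sum-over-$\sigma$ positive, and $b^{(2k)}>b^{(2k-1)}$ keeps the prefactor real), that the covariance weight is indeed $h=\tfrac1{16}$, and that the commutation relations hold for the limiting object. The one point requiring genuine care is the branch bookkeeping for the fractional powers under the analytic continuation in the substitution step; once the continuation is justified, the transfer of the differential equation is automatic.
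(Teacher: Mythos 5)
Your proposal is correct and follows essentially the same route as the paper: convergence (Theorem \ref{thm: main}) makes the limit a commuting family of SLE$_{16/3}$'s, Dub\'edat's commutation relations then force the partition function (\ref{eq: partition_function}) to satisfy the null-vector equations, and the formal substitution $(b^{(2i-1)},b^{(2i)})\to(\bar a_i,a_i)$ transfers them to the spin-correlation formula; your additional steps (analytic continuation off the real chamber, positivity of $Z$, the weight $h=\tfrac{1}{16}$) merely make explicit what the paper compresses into ``The result follows.'' One remark: your (standard, correct) null-vector equation carries $-2h/(b^{(k)}-b^{(m)})^{2}$, so after substitution the potential terms come out as $-\tfrac{1/8}{(a_{j}-a_{i})^{2}}$ and $-\tfrac{1/8}{(\bar a_{j}-a_{i})^{2}}$; the plus signs printed in (\ref{eq: BPZ}) appear to be a sign typo in the paper, as one can check at $n=1$, where $(\im a)^{-1/8}$ is annihilated only with the minus signs.
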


\begin{proof}
The scaling limit of FK--Ising interfaces is, by construction, a
family of commuting SLE. By Dubédat's commutation relations \cite{dubedat_commut},
see \cite[Section 5, in particular (5.47)]{Graham} for an explicit
treatment, multiple SLE$_{16/3}$ partition function (\ref{eq: partition_function})
(which is determined uniquely up to a multiplicative constant by its
logarithmic derivatives, and thus by the law of the curves) satisfies
these equations with $(a_{i},\bar{a}_{i})$ replaced by $(b_{2i-1};b_{2i}).$
The result follows.
\end{proof}
The result of Theorem \ref{thm: main} for $n=1$ was established
in \cite{ChelkakSmirnov_et_al}, relying on the breakthrough proof
by Smirnov \cite{Smirnov_Ising,Smirnov_Towards,ChelkakSmirnov2} of
conformal invariance of fermionic observable, combined with the precompactness
results of \cite{AnttiStas} and Russo-Seymour-Welsh bounds \cite{duminil2011connection,chelkak2016crossing};
see also \cite{KemppainenTuisku} for the doubly connected case. For
$n=2$, the main ingredient was obtained in \cite{ChelkakSmirnov2},
where convergence of the martingale observable was proven, and the
details for the convergence of interfaces were given in \cite{AnttiStasHypergeo}.
These results were later used to describe full scaling limit of the
loop representation of the FK--Ising model \cite{KemppSmiBdryTouching,kemppainen2016conformal}.

In this paper, for simplicity, we work exclusively with the square
lattice. The results can be readily extended to isoradial setup by
the techniques of \cite{ChelkakSmirnov2}; recently, Chelkak has proven
the one-curve convergence result in a fully universal setup via s-embeddings
\cite{chelkak2017planar,chelkak2020dimer,chelkak2020universality}.
Eventually, it should be possible to derive the result of the present
paper in the same generality.

The ``wired'' boundary arcs in Theorem \ref{thm: main} are meant
to be wired altogether. Another natural setup, where they are wired,
but not wired to each other, is actually dual to this one, so we do
not need to consider it separately. It would be natural to consider
other external connections between the wired arcs. Given such a connection,
the Radon--Nikodym derivative of the corresponding multiple SLE with
respect to the one considered in Theorem \ref{thm: main} is simply
a function of \emph{connection pattern }of the multiple interfaces
inside the domain. Hence, calculating the law of the curves in this
situation is essentially equivalent to computing the probabilities
of all $\frac{(2n)!}{n!(n+1)!}$ connection patterns. While we are
at the moment unable to give explicit expressions for these probabilities,
the convergence of the interfaces implies conformal invariance.
\begin{cor}
\label{cor: connection_prob}Let $\Od$ be discrete domains with $2n$
marked boundary points $b^{(1,\delta)},\dots,b^{(2n,\delta)}$. Let
$\pi$ be a partition of the set $(b^{(2,\delta)},b^{(3,\delta)}),(b^{(4,\delta)},b^{(5,\delta)}),\dots$
of wired boundary arcs. Consider the critical FK--Ising model in
$\Od$ with boundary conditions as in Theorem \ref{thm: main}, and
let $\pi^{\delta}$ be the random partition of the set of wired boundary
arcs induced by the random clusters inside $\Od$. Then, for each
$\pi$, the quantity $\P(\pi^{\delta}=\pi)$ has a conformally invariant
scaling limit.
\end{cor}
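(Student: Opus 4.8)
The plan is to read the partition $\pi^{\delta}$ off the collection of interfaces and then transport the convergence of Theorem~\ref{thm: main} to the connection probabilities by a portmanteau argument. The first point is that $\pi^{\delta}$ is a \emph{deterministic} function of the curves: with the boundary conditions of Theorem~\ref{thm: main} the $n$ interfaces each run between two of the marked points $b^{(1,\delta)},\dots,b^{(2n,\delta)}$, and together they realise one of the $\frac{(2n)!}{n!(n+1)!}$ planar (non-crossing) matchings $\alpha^{\delta}$ of these points. This matching records exactly how the bulk clusters group the wired arcs: two wired arcs lie in the same block of $\pi^{\delta}$ precisely when they are not separated from one another by the interfaces, which is the classical duality between non-crossing matchings and non-crossing partitions. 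This gives a fixed bijection $\pi=F(\alpha)$ with $\pi^{\delta}=F(\alpha^{\delta})$; crossing patterns $\pi$ are never realised, so $\P(\pi^{\delta}=\pi)\equiv 0$ there and the statement is trivial. It therefore suffices to show that $\P(\alpha^{\delta}=\alpha)$ converges to a conformally invariant limit for every planar matching $\alpha$.

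By Theorem~\ref{thm: main} the collection of full interfaces converges in law to the global multiple SLE$_{16/3}$ of Definition~\ref{def: global_msle}, in the curve topology underlying Lemma~\ref{lem: local_conv}. The matching $\alpha$ is a measurable function of the collection of curves, and I would obtain the convergence $\P(\alpha^{\delta}=\alpha)\to\P_{\mathrm{mSLE}}(A_{\alpha})$, where $A_{\alpha}$ is the event that the limiting curves realise $\alpha$, from the portmanteau theorem --- provided $A_{\alpha}$ is an almost-sure continuity set, i.e.\ the limiting law assigns mass zero to the topological boundary $\partial A_{\alpha}$ in curve space.

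Establishing $\P_{\mathrm{mSLE}}(\partial A_{\alpha})=0$ is the main obstacle. Since $\kappa=16/3>4$ the limiting curves are non-simple --- they touch themselves, the boundary and one another --- so one must rule out the degenerate configurations at which an arbitrarily small perturbation changes the matching (several strands accumulating at a common point, or a curve returning ambiguously to a marked point). Rather than analyse the limiting curves directly, the efficient route is uniform discrete input: the matching can be unstable only near ``almost touching'' four-arm configurations, and the RSW bounds for the critical FK--Ising model \cite{duminil2011connection,chelkak2016crossing} make the probability of such configurations small uniformly in $\delta$ and vanishing with the scale of the configuration. Combined with the convergence above --- the same interplay of convergence and crossing estimates that underlies Theorem~\ref{thm: main} and is systematised by Karrila \cite{karrila2019multiple} --- this shows that $\partial A_{\alpha}$ is null, so that $A_{\alpha}$ is an almost-sure continuity set and the limit exists.

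Finally, the global multiple SLE$_{16/3}$ is conformally invariant by construction: its law in the limiting domain $\Omega=\lim_{\delta}\Od$ is the image, under a conformal map $\Omega\to\H$, of the law built from the partition function (\ref{eq: partition_function}) at the boundary points. The realised matching $\alpha$ is a topological invariant of the curve collection and is therefore preserved by conformal homeomorphisms, so the limit $\lim_{\delta\to 0}\P(\pi^{\delta}=\pi)=\P_{\mathrm{mSLE}}(A_{F^{-1}(\pi)})$ depends only on the conformal class of $(\Omega;b^{(1)},\dots,b^{(2n)})$. This is the asserted conformal invariance.
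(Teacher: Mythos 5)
Your overall route coincides with the paper's: the paper's entire proof of the corollary is the remark that $\ind_{\pi^{\delta}=\pi}$ is a continuous function of the collection of interfaces, whose law converges (Theorem \ref{thm: main}) to the conformally invariant global multiple SLE of Definition \ref{def: global_msle}. Your first and last paragraphs (the bijection between non-crossing matchings and the induced partitions of wired arcs, and conformal invariance of the limit by construction) are exactly the content the paper leaves implicit, and they are correct. The gap is in the step you yourself flag as the main obstacle, and it is a genuine one: in the space of curve collections in which the convergence in law is stated, the condition $\P_{\mathrm{mSLE}}(\partial A_{\alpha})=0$ is not hard to verify --- it is \emph{false}. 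The event $A_{\alpha}$ has empty interior: any collection realizing $\alpha$ can be perturbed, by an arbitrarily small truncation of one curve, into a collection whose endpoint is no longer a marked point and which therefore lies outside every $A_{\alpha'}$. Hence $\partial A_{\alpha}=\overline{A_{\alpha}}\setminus\mathrm{int}(A_{\alpha})\supseteq A_{\alpha}$, so $\P_{\mathrm{mSLE}}(\partial A_{\alpha})\geq\P_{\mathrm{mSLE}}(A_{\alpha})>0$ precisely for the matchings of interest. No crossing estimate can repair this, and the mechanism you propose (near-touching ``four-arm'' configurations controlled by RSW) is aimed at the wrong phenomenon anyway: for $\kappa=16/3$ the curves do touch each other and the boundary, but touching does not affect which marked points a labelled curve joins, so such configurations are not where the indicator is discontinuous. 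RSW input is indeed needed in this paper, but it is spent inside the proof of Theorem \ref{thm: main} itself (Lemma \ref{lem: stability_at_hitting}); none is needed to pass from the theorem to the corollary.

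The correct repair is a support argument, which is what the paper's one-line proof tacitly means by ``continuous''. Let $d_{\min}$ be the minimal distance between the $b^{(i)}$, fix $r<d_{\min}/3$, and let $U_{r}$ be the open set of labelled collections in which the $i$-th curve runs from the $r$-neighborhood of $b^{(2i-1)}$ to the $r$-neighborhood of some other marked point. On $U_{r}$ the matching read off by proximity is well defined and locally constant, and two collections in $U_{r}$ realizing different matchings are at distance at least $d_{\min}-2r$ from each other in the sup metric (some labelled curve must have its endpoint near different marked points in the two collections). For $\delta$ small enough all the discrete laws are supported in $U_{r}$ (the endpoints are the points $b^{(i,\delta)}\to b^{(i)}$), and so is the limit law, by Definition \ref{def: global_msle}. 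One can then either apply the portmanteau inequality to the open sets $A_{\alpha}^{(r)}:=U_{r}\cap\{\text{matching}=\alpha\}$ and use that the limiting probabilities of these disjoint open sets sum to $1$ to upgrade the $\liminf$ bounds to limits, or, even more simply, replace $\ind_{A_{\alpha}}$ by the globally continuous function $h(\gamma)=\max\bigl(0,1-\mathrm{dist}(\gamma,A_{\alpha}^{(r)})/\rho\bigr)$ with $\rho<d_{\min}-2r$, which agrees with $\ind_{A_{\alpha}^{(r)}}$ on the supports of all the laws involved; then $\P(\pi^{\delta}=\pi)=\E^{\delta}[h]\to\E_{\mathrm{mSLE}}[h]=\P_{\mathrm{mSLE}}(A_{\alpha}^{(r)})$ follows directly from convergence in law. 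With this substitution for your middle paragraph, the rest of your argument goes through as written.
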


In \cite{IzyurovFree}, it was noted that probabilities of a number
of connection events can be computed directly as special values discrete
holomorphic observables. This leads to a proof of their convergence
and conformal invariance in the scaling limit that completely bypasses
the SLE theory. In the half-plane, the expression for the scaling
limits are explicit quadratic irrational functions. The class of these
events was, however, described in \cite{IzyurovFree} incorrectly.
In fact, there are $2^{n-1}-1$ such non-trivial events, one for each
non-empty subset $S$ of the set of free boundary arcs with $|S|$
even. The event corresponding to $S$ can be described as ``no dual
cluster touches an odd number of arcs in $S$''. The explicit expression
are given by the ratios of the half-plane spin-disorder correlations
to spin correlations, with the familiar replacement $(a_{i},\bar{a}_{i})\to(b_{2i-1};b_{2i})$
and the disorders corresponding to arcs in $S$, see further details
in \cite{CHI_Mixed}.

In particular, when $|S|=2$, this is just the probability that two
wired arcs are connected, generalizing a result from \cite{ChelkakSmirnov2}.
We do not know whether all connection probabilities, or even any connection
probabilities other than just described, are given by quadratic irrational
functions.

The paper is organized as follows. In Section \ref{sec: The-FK-Ising-model},
we introduce the graph notation and define the model. In Section \ref{sec: The-martingale-observable},
we recall the definition of a discrete holomorphic observable and
the convergence result from \cite{IzyurovFree}, and show that this
observable also possesses a martingale property with respect to the
FK--Ising interface. In Section \ref{sec: Tightness}, we prove tightness
of the interfaces and show that the scaling limit of the martingale
observable is a martingale with respect to any sub-sequential scaling
limit of the interfaces. In Section \ref{sec: Proof}, we prove Theorem
\ref{thm: main}. The main idea of the derivation of the law of the
driving process (up to disconnection threshold) from the martingale
observable is as in \cite{zhanLERW,IzyurovKytola,IzyurovMconn}, and
is based on the expansion of the martingale observable at the tip
of the curve. The version of this argument presented here uses contour
integration and is significantly shorter as compared to \cite{IzyurovMconn};
of a separate interest is a short proof of Lemma \ref{lem: semimart}
showing that the driving process is a semi-martingale. After deriving
the convergence in a ``local'' mode, an easy application of the
RSW results of \cite{chelkak2016crossing} shows that if the discrete
interface almost disconnects the domain, then, with high probability,
it actually does disconnect it quickly and with ``nothing interesting''
happening in between. This allows to conclude the proof by induction.

The author is grateful to Alex Karrila, Eveliina Peltola and Hao Wu
for stimulating discussions.

\section{The FK--Ising model}

\label{sec: The-FK-Ising-model}We denote $\Cd:=\delta\Z^{2}$, and
$\Cdual:=\delta\Z^{2}+\frac{\delta}{2}+\frac{i\delta}{2}$, the square
lattice of mesh size $\delta$ and its dual, respectively. By a \emph{simply
connected discrete domain} $\Od$ we mean a domain whose boundary
$\partial\Od$ is a simple nearest-neighbor closed path in $\Cdual$.
We denote by $\Edges(\Od)$ and $\Vertices(\Od)$ the sets of edges
and vertices of $\Cd$ that lie in $\Od$, respectively.

The \emph{alternating} \emph{wired/free} \emph{boundary conditions}
$\bcond^{\delta}$ for a simply-connected domain $\Od$ are specified
by a partition of $\partial\Od$ into a ``wired'' part $\bcond_{\wired}^{\delta}:=\beta_{2}^{\delta}\cup\dots\cup\beta_{2n}^{\delta}$
and a free part $\bcond_{\free}^{\delta}=\beta_{1}^{\delta}\cup\dots\cup\beta_{2n-1}^{\delta}$,
where $\bcond_{i}^{\delta}=(b^{(i,\delta)};b^{(i+1,\delta)})$ are
boundary arcs, and $b^{(2n+1,\delta)}=b^{(1,\delta)},\dots,b^{(2n,\delta)}\in\partial\Od\cap\Vertices(\Cdual)$
are boundary points separating the arcs, in counterclockwise order.
Put 
\[
\Edgesprim(\Od)=\Edges(\Od)\cup\{e\in\Edges(\Cd):e\cap\bcond_{\wired}^{\delta}\neq\emptyset\},
\]
the set of edges of $\Cd$ that either belong to $\Od$, of cross
the ``wired'' part of the boundary. 

The main subject of this paper is the critical planar $q=2$ Fortuin--Kasteleyn
random cluster model, or FK--Ising model \cite{fortuin1972random,grimmett2004random}.
This is a random collection $E\subset\Edgesprim(\Od)$ of edges chosen
according to the probability measure 
\[
\P_{\Od,\bcond^{\delta}}(E)=\frac{1}{\ZFK}\left(\frac{p}{1-p}\right)^{|E|}2^{\Clusters(E)}.
\]
Here $\Clusters(E)$ is the number of clusters (connected components)
in $E$, where all vertices outside of $\Od$ are considered to belong
to the same cluster, and 
\[
\ZFK=\sum_{E}p^{|E|}(1-p)^{-|E|}2^{\Clusters(E)}
\]
 is the partition function. 

Given $e\in\Edges(\Cd)$, its dual edge $\Dual e\in\Edges(\Cdual)$
is the edge of the dual lattice that crosses $e.$ Given a configuration
$E\subset\Edges(\Od)$, we define its \emph{dual configuration} $\Dual E\subset\Edges(\Cdual)$
by 
\[
\Dual E:=\{\Dual e:\left(e\in\Edges(\Od)\text{ or }e\cap\partial\Od\neq\emptyset\right)\text{ and }\,e\notin E\}.
\]
Note that it particular, $\Dual E$ always contains all the edges
comprising $\bcond_{\free}^{\delta}$. It is not hard to see that
adding an edge to $E$ either disconnects two clusters of $\Dual E$,
or connects two clusters of $E$, but never both. Hence, $\Clusters(\Dual E)+|E^{\bullet}|-\Clusters(E)$
does not, in fact, depend on $E$, and the probability of the configuration
can be also written as 
\[
\P_{\Od,\bcond^{\delta}}(E)=\frac{1}{\Dual{\ZFK}}\left(\frac{2(1-p)}{p}\right)^{|\Dual E|}2^{\Clusters(\Dual E)}.
\]
The self-dual point of the model, given by the condition $p/(1-p)=2(1-p)/p$,
is also known to be its critical point; this result in fact holds
for any $q\geq1$ \cite{beffara2012self}. For the rest of the paper,
we set $p$ to its critical value $p_{c}=2-\sqrt{2}$.

Note that in $\Dual E$, the arcs $\beta_{1}^{\delta},\dots,\beta_{2n-1}^{\delta}$
are wired, but not wired together, which is the duality mentioned
in the introduction. 

The \emph{medial lattice} $\Medial(\Cd)$ is the square lattice whose
vertices are the midpoints of edges of $\Cd$. Given a configuration
$E$, an \emph{exploration interface }is a nearest-neighbor path on
$\Medial(\Cd)$ that turns by $\pm\frac{\pi}{2}$ at every step, and
never crosses edges of either $E$, or $\Dual E$, or $\text{\ensuremath{\Edges}(\ensuremath{\Cd}\ensuremath{\ensuremath{\setminus\Od}})}$,
transversally, see Figure \ref{fig: domain}. An exploration interface
$\gamma$ is completely determined by its starting (oriented) edge
and the configuration $E$; in its turn, its initial segment determines
the state of all edges whose midpoints it has visited, except possibly
for the last one. 

We will be interested in the statistics of the interface $\gamma^{\delta}=(\gamma_{0}^{\delta},\gamma_{1}^{\delta},\dots)$
starting in between $\beta_{\wired}^{\delta}$ and $\beta_{\free}^{\delta}$,
say at $b^{(1,\delta)}$. More precisely, we let $\gamma_{0}^{\delta}$
be an oriented edge of $\Medial(\Cd)$ that has $b^{(1,\delta)}$
on its right and a vertex of $\Cd\setminus\Od$ on its left. We say
the an edge of $\Edgesprim(\Od)$ is \emph{revealed} by $\gamma_{[0,t]}^{\delta}$
if its midpoint is an endpoint of one of the edges $\gamma_{0}^{\delta},\dots,\gamma_{t-1}^{\delta}.$
By induction, the medial edge $\gamma_{t}^{\delta}$ will always have
on its right a dual vertex connected to $\beta_{1}^{\delta}$ by a
sequence of edges of $\Dual E$ revealed by $\gamma_{[0,t]}^{\delta}$,
and on its left a primal vertex connected to $\Cd\setminus\Od$ by
a sequence of edges of $E$ revealed by $\gamma_{[0,t]}^{\delta}$.

By $\Od_{t},$ we denote the (random) domain obtained by removing
from $\Od$ all the vertices that are incident to the edges of $E$
revealed by $\gamma^{\delta}$ by time $t$. Although $\Od_{t}$ is
not necessarily connected, each of its connected components is simply
connected. The domain $\Od_{t}$ is naturally equipped with the boundary
conditions $\bcond_{t}^{\delta}$ that are free on $\bcond_{\free}^{\delta}$
and on all dual edges revealed to be in $\Dual E$ by time $t$, and
wired elsewhere. As long as $b^{(2,\delta)},\dots,b^{(2n,\delta)}$
are on the boundary of the same connected component of $\Od_{t}$,
on that component $\bcond_{t}^{\delta}$ are specified by $2n$ arcs,
with the additional marked points separating the free boundary arc
adjacent to $b^{(2,\delta)}$ and the wired boundary arc adjacent
to $b^{(2n,\delta)}.$ The conditional law of $E$ given $\gamma_{[0,t]}^{\delta}$
is the union of the edges of $E$ revealed by time $t$ and a critical
FK--Ising configuration in $\Omega_{t}^{\delta}$ with boundary conditions
$\bcond_{t}^{\delta}$. This property is clear from the definition
and is an instance of the \emph{domain Markov property}.
\begin{figure}

\includegraphics[width=0.6\paperwidth]{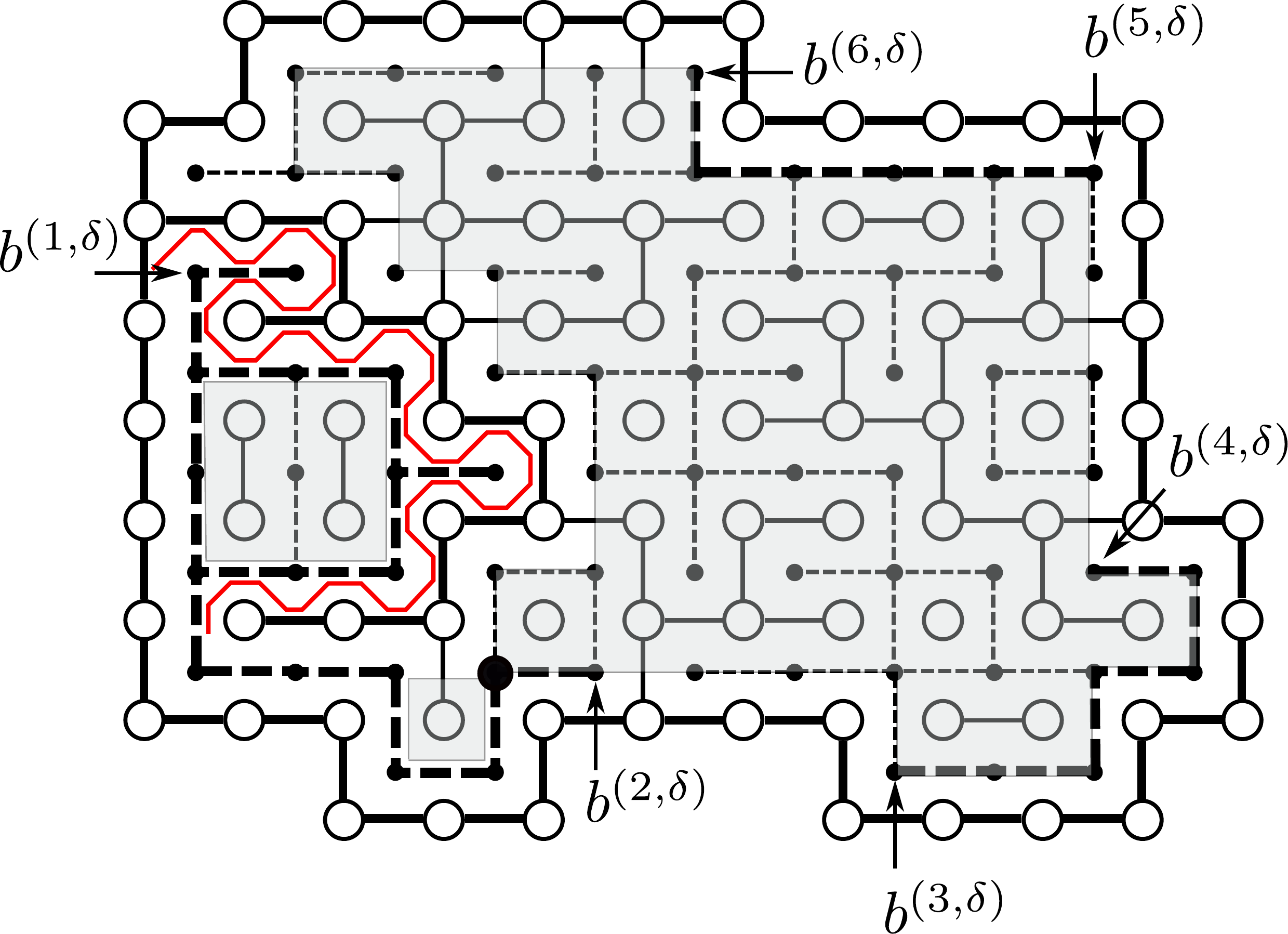}\caption{\label{fig: domain}An example of a discrete domain $\protect\Od$
with six marked boundary points, a configuration $E$ (solid edges
connecting the $\circ$ vertices of $\protect\Cd$) and $E^{\bullet}$
(dashed edges connecting the $\bullet$ vertices of $\protect\Cdual$).
The vertices of the exterior loop formed by solid edges belong to
$\protect\Cd\setminus\protect\Od$ and are thus wired altogether.
An initial segment $\gamma_{[0,t]}^{\delta}$ of the interface starting
from $b^{(1,\delta)},$ for $t=24,$ is drawn in red. Bold edges and
dual edges are discovered by the interface, or their state is prescribed
by boundary conditions. The state of thin edges given $\gamma_{[0,t]}^{\delta}$
is genuinely random. In gray is the domain $\protect\Od_{t}$, which
consists of three connected components. One of them has on its boundary
$b^{(2,\delta)},\dots,b^{(2n,\delta)}$ and exactly one additional
marked boundary point, the big black dot.}

\end{figure}

\section{The martingale observable}

\label{sec: The-martingale-observable}In this section, we recall
from \cite{IzyurovFree} the definition of a discrete holomorphic
observable that has been used to derive convergence of multiple interfaces
in the spin Ising model in the presence of free boundary arcs. It
turns out that \emph{the same observable} possesses a martingale property
with respect to the FK interface. The observable was given in \cite{IzyurovFree}
in terms of the low-temperature expansion. The proof of its martingale
property consists of first writing it in the order-disorder formalism
of Kadanoff--Ceva \cite{kadanoff1971determination}, see also \cite{ChelkakCimasoniKassel,CHI_Mixed},
and then invoking Edwards--Sokal coupling \cite{edwards1988generalization,grimmett2004random}
and the domain Markov property. 

For a discrete domain $\Od$, denote by $\Odual:=\Cdual\cap\left(\Od\cup\partial\Od\right)$
its dual domain. Let $a,z$ be two \emph{corners} in $\Od$ or adjacent
to its boundary, i. e., two midpoints of segments ($a^{\bullet}a^{\circ}$)
and $(z^{\bullet}z^{\circ})$ connecting vertices $a^{\circ},z^{\circ}\in\Vertices(\Cd)$
with adjacent dual vertices $a^{\bullet},z^{\bullet}\in\Odual$. We
denote by $\Conf_{\Od}$ the set of all subsets $S\subset\Edges(\Odual)$
such that all vertices of $\Odual$ have even degree in $S$, and
by $\Conf_{\Od}(a,z)$ the set of all subsets $S\subset\Edges(\Odual)$
such that all such that all vertices of $\Odual$ have even degree
in $S$, except for $a^{\bullet},z^{\bullet}$ that have odd degree
in $S$. The winding $\winding(S)$ is defined by the following procedure:
add to $S$ the segments $(a,a^{\bullet})$ and $(z^{\bullet},z)$,
and decompose the resulting graph into a collection of loops and a
path $S'$ connecting $a$ and $z$, in such a way that the loops
and the path do not intersect themselves or each other transversally.
(This amounts to resolving each vertex of degree four in one of the
two possible ways.) Then, the winding number of $S'$ (i. e., the
rotation number of its tangent vector) does not depend on the decomposition,
and that is defined to be $\winding(S).$ 
\begin{defn}
(\cite{IzyurovFree}, Section 2, case $m=0$, $s=2$) Let $(\Od,\bcond^{\delta})$
be a discrete simply connected domain, and let $a\notin\text{\ensuremath{\Od}}$
be a marked corner such that $a^{\bullet}\in\pa\Od$ separates $\beta_{\wired}^{\delta}$
from $\beta_{\free}^{\delta}$. The fermionic observable is defined
by 

\[
F_{\Od,\bcond^{\delta}}(z)=\i\eta_{a}\cdot\frac{\sum_{S\in\Conf_{\Od}(a,z)}e^{-\frac{\i}{2}\winding(S)}\alpha^{|S\setminus\beta_{\free}^{\delta}|}.}{\sum_{S\in\Conf_{\Od}}\alpha^{|S\setminus\beta_{\free}^{\delta}|}.},\quad\alpha:=\sqrt{2}-1,
\]
where $\eta_{a}:=\left(\frac{a-a^{\bullet}}{|a-a^{\bullet}|}\right)^{-\frac{1}{2}}.$ 
\end{defn}

\begin{rem}
The observable $F_{\Od,\bcond^{\delta}}(z)$ depends on the choice
of the square root in the definition of $\eta_{a}$. If a family of
simply-connected domains all have common part of the boundary, as
will be the case with domains $\Od_{t},$ then such a choice can be
made coherently by fixing the sign of the square root of the outer
normal at some point of the common boundary and then extending it
continuously, say, in counterclockwise direction. With this convention,
$F_{\Od,\bcond^{\delta}}$ depends only on $a^{\bullet}$ but not
on the choice of the corner $a$ adjacent to it. We incorporate the
choice of $a^{\bullet}$ into the boundary conditions $\bcond^{\delta}$
and do not stress it separately in the notation. 
\end{rem}

\begin{lem}
\label{lem: mart}For any corner $z\in\Od$, the sequence 
\[
F_{\Od_{t},\bcond_{t}^{\delta}}(z)
\]
 is a martingale with respect to the filtration $\F_{t}:=\sigma(\gamma_{[0,t]})$,
as long as $z$ is in the same connected component of $\Od_{t}$ as
$b^{(2,\delta)},\dots,b^{(2n,\delta)}$
\end{lem}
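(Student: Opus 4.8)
The statement asserts that the fermionic observable, evaluated in the explored domain $\Od_t$ with its updated boundary conditions, is a martingale along the interface. The plan is to reduce the martingale property to the domain Markov property of the FK--Ising model, by re-expressing the observable as a genuine conditional expectation of a single random variable that depends only on the configuration $E$ and not on the stopping time $t$. The essential idea is that $F_{\Od,\bcond^\delta}(z)$, although defined via the order--disorder (Kadanoff--Ceva) correlation formalism, can be written as the expectation of a local quantity attached to the pair of corners $(a,z)$: informally, something of the form $\i\eta_a\,\E_{\Od,\bcond^\delta}\bigl[\mu_{a^{\bullet}}\sigma\dots\bigr]$ (a spin--disorder correlator), with the winding phase $e^{-\frac{\i}{2}\winding(S)}$ encoding the fermionic nature and $\alpha^{|S\setminus\beta_{\free}^\delta|}$ being the Boltzmann weight. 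I would first carry out this rewriting carefully, invoking the Edwards--Sokal coupling to pass between the FK--Ising measure $\P_{\Od,\bcond^\delta}$ and the spin representation, so that the ratio of contour sums becomes an honest probabilistic expectation.

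Once the observable is expressed as $F_{\Od,\bcond^\delta}(z)=\i\eta_a\,\E_{\Od,\bcond^\delta}[X_{a,z}]$ for a random variable $X_{a,z}$ that is a function of the spin/FK configuration in a neighborhood of $z$ (together with the fixed corner $a$), the martingale property should follow from a standard conditioning argument. The key point is that the interface $\gamma_{[0,t]}^\delta$ is a measurable function of $E$, and by the domain Markov property stated in the excerpt, the conditional law of $E$ given $\gamma_{[0,t]}^\delta$ is that of a fresh critical FK--Ising configuration in $\Od_t$ with boundary conditions $\bcond_t^\delta$, glued to the revealed edges. I would therefore compute
\[
\E_{\Od,\bcond^\delta}\bigl[X_{a,z}\mid \F_t\bigr]=\E_{\Od_t,\bcond_t^\delta}[X_{a,z}],
\]
provided $z$ (and $a$) lie in the component of $\Od_t$ carrying the marked points, so that $X_{a,z}$ is measurable with respect to the unexplored part. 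The right-hand side is exactly $\i^{-1}\eta_a^{-1}$ times $F_{\Od_t,\bcond_t^\delta}(z)$, up to reconciling the winding and phase conventions; here the remark about choosing the square root $\eta_a$ coherently along the common boundary of the $\Od_t$ is what guarantees the prefactors match, so that no spurious sign appears as $t$ varies.

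The main obstacle I expect is the bookkeeping in the contour/order--disorder rewriting: one must verify that the winding term and the boundary weight $\alpha^{|S\setminus\beta_{\free}^\delta|}$ transform correctly under restriction to $\Od_t$, and in particular that the contribution of the already-revealed edges factors out cleanly rather than coupling to the unexplored region. Because the disorder insertion at $a^{\bullet}$ sits on the fixed boundary separating the wired and free arcs, and the starting point $b^{(1,\delta)}$ of the interface is adjacent to $a^{\bullet}$, the phase $e^{-\frac{\i}{2}\winding(S)}$ along a revealed initial segment of the interface is deterministic given $\gamma_{[0,t]}^\delta$; I would use this to absorb the revealed portion into the (common) normalization and confirm that what remains is precisely the observable in $\Od_t$. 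The comparison of the denominators (partition functions) in $\Od$ versus $\Od_t$ is the other point requiring care, but it is handled by the same domain Markov factorization, since the FK weight of the revealed edges is constant on $\F_t$ and cancels between numerator and denominator.
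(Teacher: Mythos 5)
Your overall architecture matches the paper's: pass to the spin representation via Edwards--Sokal, rewrite $F_{\Od,\bcond^{\delta}}(z)$ as an honest Ising expectation of a random variable, and deduce the martingale property from the domain Markov property and the tower rule. However, there is a genuine gap precisely at the point where the real content of the proof lies: how the random variable is made measurable with respect to the unexplored region. Once a disorder line $\gamma^{\bullet}$ is fixed, the correlator becomes the expectation of $X_{a,z}=\sigma_{z^{\circ}}\sigma_{a^{\circ}}\prod_{(xy)\cap(\gamma^{\bullet}\setminus\bcond_{\free}^{\delta})\neq\emptyset}\alpha^{\sigma_{x}\sigma_{y}}$, which is a function of the spins along the \emph{entire} line; it is not ``a function of the spin/FK configuration in a neighborhood of $z$,'' and your parenthetical requirement that $a$ lie in the component of $\Od_{t}$ carrying the marked points can never hold, since $a^{\bullet}$ sits at $b^{(1,\delta)}$, the very point where the explored curve begins. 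The paper's key move, absent from your plan, is to exploit the fact that the correlator does not depend on the choice of $\gamma^{\bullet}$ and therefore to re-route $\gamma^{\bullet}$ so that its weighted part $\gamma^{\bullet}\setminus\bcond_{\free}^{\delta}$ lies entirely inside $\Od_{t}$ --- this is possible exactly when $z$ is in the component containing $b^{(2,\delta)},\dots,b^{(2n,\delta)}$, which is where that hypothesis enters. One also needs the observation that, after converting the winding into spins via the auxiliary primal path $\gamma^{\circ}$, the prefactor becomes $\eta_{z}$ rather than $\i\eta_{a}$, and $\sigma_{a^{\circ}}$ is the constant exterior spin; only then does the conditional expectation literally coincide with the defining formula for $F_{\Od_{t},\bcond_{t}^{\delta}}(z)$.

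Your proposed substitute --- routing the line along the revealed interface and ``absorbing'' its contribution because the phase along a revealed initial segment is deterministic --- does not work. It conflates two different objects: the winding of the FK exploration path (which is indeed $\F_{t}$-measurable) and the winding of the low-temperature contours $S(\sigma)\triangle\gamma^{\bullet}$, which are spin domain walls and do not follow the revealed interface. Concretely, if $\gamma^{\bullet}$ runs along the revealed dual-open edges, then each weight $\alpha^{\sigma_{x}\sigma_{y}}$ attached to a revealed (closed) primal edge $(xy)$ involves the spin of a vertex on the free side of the interface, which is \emph{not} determined by $\F_{t}$, since spins across closed edges are unconstrained; these factors are genuinely coupled to the unexplored configuration, and they have no counterpart in the $\Od_{t}$-observable, where those dual edges belong to the free part of $\bcond_{t}^{\delta}$ and carry no weight. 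Unconditionally, both routings give the same number (this is the path-independence), but their conditional expectations given $\F_{t}$ differ, and only the routing through the unexplored region identifies $\E\left[F_{\Od,\bcond^{\delta}}(z)\mid\F_{t}\right]$ with $F_{\Od_{t},\bcond_{t}^{\delta}}(z)$.
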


\begin{proof}
Recall the \emph{Edwards--Sokal coupling}: one samples $E\subset\Edgesprim(\Od)$
from the FK--Ising measure and then assigns a spin $\sigma=\pm1$
to each vertex uniformly at random subject to the conditions that
all vertices in each cluster receive the same spin. The resulting
spin configuration $\sigma:\Vertices(\Od)\to\left\{ \pm1\right\} $
has the distribution of the critical Ising model in $\Od$ with free
boundary conditions on $\beta_{\free}^{\delta}$ and fixed boundary
conditions on $\beta_{\wired}^{\delta}$ (i. e., the spins do interact
across $\beta_{\wired}^{\delta}$ and don't interact across $\bcond_{\free}^{\delta}$,
and all the spins outside of $\Od$ are required to be the same).
By domain Markov property, it is clear that conditionally on $\gamma_{[0,t]}$,
the spin configuration $\sigma$ has the distribution of the Ising
model in $\Od_{t}$ with the above boundary conditions. 

We now express $F_{\Od,\bcond^{\delta}}$ as an Ising model correlation.
Fix two lattice paths $\gamma^{\bullet},\gamma^{\circ}$ on $\Odual$
(respectively, $\Od$) connecting $z^{\bullet}$ to the free boundary
arc $\bcond_{1}^{\delta}$, and then, along $\bcond_{1}^{\delta}$,
to $a^{\bullet},$ (respectively, connecting $z^{\circ}$ to a point
outside $\Od$ and then, counterclockwise, to $a^{\circ}$). Then,
$S\mapsto S\triangle\gamma^{\bullet}$, where $\triangle$ stands
for symmetric difference, is a bijection between $\Conf_{\Od}(a,z)$
and $\Conf_{\Od}$. Consequently, we can write 
\[
F_{\Od,\bcond^{\delta}}(z)=\i\eta_{a}\cdot\frac{\sum_{S\in\Conf_{\Od}}e^{-\frac{\i}{2}\winding(S\triangle\gamma^{\bullet})}\alpha^{|(\gamma^{\bullet}\setminus\beta_{\free}^{\delta})\setminus S|-|(\gamma^{\bullet}\setminus\beta_{\free}^{\delta})\cap S|+|S\setminus\beta_{\free}^{\delta}|}}{\sum_{S\in\Conf_{\Od}}\alpha^{|S\setminus\beta_{\free}^{\delta}|}}.
\]
Using the low-temperature expansion, this can be written as 
\[
F_{\Od,\bcond^{\delta}}(z)=\i\eta_{a}\E_{\mathrm{Ising}}\left[e^{-\frac{\i}{2}\winding(S(\sigma)\triangle\gamma^{\bullet})}\prod_{(xy)\cap(\gamma^{\bullet}\setminus\bcond_{\free}^{\delta})\neq\emptyset}\alpha^{\sigma_{x}\sigma_{y}}\right].
\]
where $S(\sigma)\in\Conf_{\Od}$ is the set of dual edges separating
vertices with different spins. Now, we note that for a planar loop
$\gamma$, one has $e^{-\frac{\i}{2}\winding(\gamma)}=(-1)^{N(\gamma)+1}$,
where $N(\gamma)$ is the number of transversal self-intersections
of $\gamma$. Applying this to the (random) loop that is comprised
of $\hat{\gamma}^{\circ}:=(zz^{\circ})\cup\gamma^{\circ}\cup(a^{\circ}a)$
and the path from $a$ to $z$ in the decomposition of $S(\sigma)\triangle\gamma^{\bullet}$,
we infer that $e^{-\frac{\i}{2}\winding(S(\sigma)\triangle\gamma^{\bullet})}=e^{\frac{\i}{2}\winding(\hat{\gamma})}(-1)^{\gamma^{\bullet}\cap\gamma^{\circ}}(-1)^{|S(\sigma)\cap\gamma^{\circ}|+1}$,
where we take into account that any two planar loops have an even
number of transversal intersections, and the loops do not intersect
the random path. Finally, we note that $(-1)^{|S(\sigma)\cap\gamma^{\circ}|}=\sigma_{z^{\circ}}\sigma_{\mathrm{a^{\circ}}}$,
(recall that the spin is constant outside $\Od$), and $\i\eta_{a}e^{\frac{\i}{2}\winding(\hat{\gamma})}=\eta_{z}.$
We conclude that 
\[
F_{\Od,\bcond^{\delta}}(z)=\eta_{z}\E_{\mathrm{Ising},\Od}\left[\sigma_{z^{\circ}}\sigma_{a^{\circ}}\prod_{(xy)\cap(\gamma^{\bullet}\setminus\bcond_{\free}^{\delta})\neq\emptyset}\alpha^{\sigma_{x}\sigma_{y}}\right].
\]
Since the left-hand side does not depend on the choice of $\gamma^{\bullet}$,
neither does the right-hand side. Hence, as long as $z$ is in the
same connected components as $b^{(2,\delta)},\dots,b^{(2n,\delta)}$,
we may assume that $\gamma^{\bullet}\setminus\bcond_{\free}^{\delta}$
lies in $\Od_{t}$. By the above remark on the domain Markov property,
$F_{\Od_{t},\bcond_{t}^{\delta}}(z)=\E\left[F_{\Od,\bcond^{\delta}}(z)|\gamma_{[0,t]}\right]$,
and thus it is trivially a martingale.
\end{proof}
It was proven in \cite[Theorem 2.6]{IzyurovFree} (see also \cite{CHI_Mixed}
for a more general setup) that if $(\Od,\bcond^{\delta})\stackrel{\mathrm{Cara}}{\longrightarrow}(\Omega,\bcond)$,
then the observable $2^{-\frac{1}{4}}\pi^{\frac{1}{2}}\delta^{-1/2}F_{\Od,\bcond^{\delta}}(z)$
(more precisely, its sum over two corners adjacent to the same edge,
but to different vertices and dual vertices) converges in the scaling
limit to a holomorphic function $f_{\Omega,\bcond}(z)$ that satisfies,
under conformal maps, the covariance rule 
\begin{equation}
f_{\Omega,\bcond}(z)=\varphi'(z)^{\frac{1}{2}}f_{\varphi(\Omega),\varphi(\bcond)}(\varphi(z)).\label{eq: ccov}
\end{equation}
Moreover, if $\Omega=\H$ and $b^{(1)}<\dots<b^{(2n)}$, then the
observable is given by 
\[
f_{\H,\bcond}(z)=\frac{P_{\bcond}(z)}{\prod_{i=1}^{n}\sqrt{(z-b^{(2i-1)})(z-b^{(2i)})}},
\]
 where $P_{\bcond}$ is a polynomial of degree $\leq n-1$ whose coefficients
are uniquely determined by the following system of linear equations:
for $i=2,\dots,n$, one has 
\begin{align}
\lim_{z\to b^{(2i-1)}}f_{\H,\bcond}(z)\sqrt{(z-b^{(2i-1)})(z-b^{(2i)})}= & -\lim_{z\to b^{(2i)}}f_{\H,\bcond}(z)\sqrt{(z-b^{(2i-1)})(z-b^{(2i)})}\label{eq: f_linear_1}
\end{align}
for $i=2,\dots,n$, and 
\begin{equation}
\lim_{z\to b^{(1)}}\sqrt{z-b^{(1)}}f_{\H,\bcond}(z)=\i.\label{eq: f_linear_2}
\end{equation}

\begin{rem}
One has to make several minor adjustments to bring the results of
\cite{IzyurovFree} into the above form. First, we re-number the boundary
points so that the arc $(b_{2k-1},b_{2k})$ of \cite{IzyurovFree}
becomes $(b^{(1)},b^{(2)});$ we then have $a_{1}=b^{(1)}$ and $a_{2}=b^{(2)}$.
Second, the normalization factor in \cite[Theorem 2.6]{IzyurovFree}
is actually $\sum_{S\in\Conf_{\Od}(b^{(1)},b^{(2)})}\alpha^{|S\setminus\bcond_{\free}^{\delta}|}$
rather than $\sum_{S\in\Conf_{\Od}}\alpha^{|S\setminus\bcond_{\free}^{\delta}|}.$
These two are equal because $S\mapsto S\triangle(b^{(1)},b^{(2)})$
is a bijection between $\Conf_{\Od}$ and $\Conf_{\Od}(b^{(1)},b^{(2)}),$
which is weight preserving since $(b^{(1)},b^{(2)})\subset\bcond_{\free}^{\delta}.$
Finally, the result in \cite{IzyurovFree} gives normalizing condition
(\ref{eq: f_linear_2}) at $b^{(2)}$ rather than $b^{(1)}$; this
is equivalent to (\ref{eq: f_linear_2}) since in the course of the
proof of Theorem 2.6, the relation (\ref{eq: f_linear_1}) was proven,
without the $-$ sign, also for $i=1$.

In principle, one could write downs a (rather involved) explicit expression
of $f_{\H,\bcond},$ see \cite{CHI_Mixed}. However, all we shall
need is (\ref{eq: ccov}) and the following expansion:
\end{rem}

\begin{prop}
\label{prop: f_expansion}We have
\begin{equation}
f_{\H,\bcond}(b^{(1)},z)=\i\cdot(z-b^{(1)})^{-\frac{1}{2}}\left(1+2\mathcal{A}_{\bcond}\cdot(z-b^{(1)})+o(z-b^{(1)})\right)\quad z\to b^{(1)},\label{eq: f_expansion}
\end{equation}
where 
\[
\mathcal{A}_{\bcond}=2\partial_{b^{(1)}}\log Z\left(b^{(1)},\dots,b^{(2n)}\right)
\]
 and $Z\left(b^{(1)},\dots,b^{(2n)}\right)$ is given by (\ref{eq: partition_function}).
\end{prop}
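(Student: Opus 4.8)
The plan is to compute the asymptotic expansion of $f_{\H,\bcond}(z)$ as $z \to b^{(1)}$ directly from the explicit formula
\[
f_{\H,\bcond}(z) = \frac{P_{\bcond}(z)}{\prod_{i=1}^{n}\sqrt{(z-b^{(2i-1)})(z-b^{(2i)})}},
\]
and match it against the claimed form (\ref{eq: f_expansion}). Writing $\epsilon = z - b^{(1)}$, the denominator factor $\sqrt{(z-b^{(1)})(z-b^{(2)})}$ produces the leading singularity $\epsilon^{-1/2}$, while every other factor and $P_{\bcond}(z)$ is smooth at $b^{(1)}$. Expanding each smooth factor to first order in $\epsilon$ and collecting terms, one sees that
\[
f_{\H,\bcond}(z) = \i\,\epsilon^{-1/2}\Bigl(c_0 + c_1 \epsilon + o(\epsilon)\Bigr),
\]
where the normalization $c_0 = 1$ is forced by condition (\ref{eq: f_linear_2}). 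The coefficient $\mathcal{A}_{\bcond}$ is then, by definition, $\tfrac12 c_1/c_0 = \tfrac12 c_1$; explicitly it is a sum of the logarithmic derivatives at $b^{(1)}$ of all the smooth factors, namely $-\tfrac14\sum_{i\ge 2}\bigl((b^{(1)}-b^{(2i-1)})^{-1}+(b^{(1)}-b^{(2i)})^{-1}\bigr) - \tfrac14(b^{(1)}-b^{(2)})^{-1}$ plus the contribution $P'_{\bcond}(b^{(1)})/P_{\bcond}(b^{(1)})$ from the numerator.

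The crux of the argument is therefore to show that this expression for $\mathcal{A}_{\bcond}$ equals $2\partial_{b^{(1)}}\log Z$. I would separate $\log Z$ into the prefactor $\prod_k (b^{(2k)}-b^{(2k-1)})^{-1/8}$ and the summed-square part $\tfrac12\log\bigl(\sum_\sigma \prod_{i<j}\chi_{ij}^{\sigma_i\sigma_j/4}\bigr)$. Differentiating the prefactor in $b^{(1)}$ is elementary and should reproduce the explicit rational terms above. The genuine difficulty is the summed term: the derivative $\partial_{b^{(1)}}\log\bigl(\sum_\sigma \cdots\bigr)$ involves the derivatives $\partial_{b^{(1)}}\log\chi_{1j}$, weighted by the Gibbs-type averages $\langle \sigma_1\sigma_j\rangle$ of the "spin system" defined by the weights $\prod_{i<j}\chi_{ij}^{\sigma_i\sigma_j/4}$. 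So I expect the hard part to be identifying the coefficient $P'_{\bcond}(b^{(1)})/P_{\bcond}(b^{(1)})$ of the fermionic observable with exactly this spin average, i.e.\ recognizing that the numerator polynomial $P_{\bcond}$, which encodes the Ising correlation defining $f$, carries the same combinatorial content as the sum over $\sigma \in \{\pm 1\}^n$ in the partition function.

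To bridge this gap I would exploit the fact, emphasized in the introduction, that $Z$ is (formally) the half-plane bulk spin correlation under $(a_i,\bar a_i)\leftrightarrow(b^{(2i-1)},b^{(2i)})$, and that the fermionic observable $f_{\H,\bcond}$ is precisely the spin–fermion correlation whose short-distance expansion near the insertion point governs the operator product expansion. Concretely, I would use the linear system (\ref{eq: f_linear_1})--(\ref{eq: f_linear_2}) characterizing $P_{\bcond}$: these are $n$ conditions fixing the $n$ coefficients of $P_{\bcond}$, and the ratio $P'_{\bcond}(b^{(1)})/P_{\bcond}(b^{(1)})$ can be extracted by implicit differentiation of this system in $b^{(1)}$, or equivalently by a contour-integration/residue computation pairing $f$ against itself. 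An alternative, and perhaps cleaner, route is to avoid computing $P_{\bcond}$ explicitly: since (\ref{eq: ccov}) and the defining linear conditions determine $f$ canonically, I would check directly that the function built from the derivatives of $\log Z$ satisfies the same expansion and normalization, so that the claimed $\mathcal{A}_{\bcond}$ is consistent with the (already established) convergence of the observable. I would verify the identity in the base case $n=1$, where $Z = (b^{(2)}-b^{(1)})^{-1/8}$ and $f_{\H,\bcond}(z)=\i((z-b^{(1)})(z-b^{(2)}))^{-1/2}(z-b^{(2)})^{\text{(known power)}}$ reduces the statement to a one-line check, and then treat the general $n$ by the residue/linear-algebra identity above, which is where I anticipate the bulk of the computational effort.
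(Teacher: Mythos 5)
Your reduction of the statement to an identity between two explicit quantities is fine: writing $f_{\H,\bcond}(z)=(z-b^{(1)})^{-1/2}h(z)$ with $h$ analytic at $b^{(1)}$, the normalization (\ref{eq: f_linear_2}) gives $h(b^{(1)})=\i$, and the expansion (\ref{eq: f_expansion}) holds automatically with $2\mathcal{A}_{\bcond}=h'(b^{(1)})/h(b^{(1)})$. (Minor slip: in your explicit formula for $\mathcal{A}_{\bcond}$ the term $P'_{\bcond}(b^{(1)})/P_{\bcond}(b^{(1)})$ should also carry the factor $\tfrac12$ that you correctly applied to the square-root factors.) But the proposition is precisely the claim that this coefficient equals $2\partial_{b^{(1)}}\log Z$ with $Z$ given by (\ref{eq: partition_function}), and this is exactly the step you do not prove. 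You name it as ``the genuine difficulty,'' propose three possible routes (implicit differentiation of the linear system, a residue pairing, or verifying that a function built from $\partial\log Z$ satisfies the same characterization), and defer all of them to ``the bulk of the computational effort.'' None of these is carried out, and none is stated precisely enough to check that it would close: in particular, matching $P'_{\bcond}(b^{(1)})/P_{\bcond}(b^{(1)})$ against the Gibbs-type average $\sum_j\langle\sigma_1\sigma_j\rangle\,\partial_{b^{(1)}}\log\chi_{1j}$ requires solving, or at least controlling, the $n\times n$ linear system for $P_{\bcond}$, which is a nontrivial algebraic identity, not a routine verification. As it stands, the proposal establishes only the trivial part of the proposition and leaves its entire content as a plan.

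For comparison, the paper closes this gap not by computation but by recognition: the linear system (\ref{eq: f_linear_1})--(\ref{eq: f_linear_2}) is \emph{formally identical} to the system (A2) in the appendix of \cite{ChelkakHonglerIzyurov}, with $(b^{(2i-1)},b^{(2i)})$ replacing $(a_i,\bar a_i)$; hence its solution, and the coefficient $\mathcal{A}$ extracted from it, coincide with the quantity $\mathcal{A}_{[\H,a_1,\dots,a_n]}$ computed there, which was shown in that paper to equal $2\partial\log$ of the half-plane spin correlation --- and formula (\ref{eq: partition_function}) is exactly that correlation under the same substitution. Your heuristic appeal to the spin-correlation coincidence is therefore exactly the right instinct, but in the paper it is not a heuristic: it is the proof, because the hard identity you would need to derive by hand was already established in \cite{ChelkakHonglerIzyurov} for the literally identical linear system. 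Without either importing that result or actually performing your residue/linear-algebra computation, the argument is incomplete.
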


\begin{proof}
The linear system (\ref{eq: f_linear_1}) -- (\ref{eq: f_linear_2})
is formally the same as the system (A2) in \cite[Appendix 2]{ChelkakHonglerIzyurov};
with $b^{(2i-1)};b^{(2i)}$ replacing $a_{i};\bar{a}_{i}$. Hence,
its solution is the same, and in fact $\mathcal{A}_{\bcond}$ is given
by the expression for $\mathcal{A}_{[\H,a_{1},\dots,a_{n}]}$ with
the above substitution.
\end{proof}

\section{Tightness and the martingale property in the scaling limit}

\label{sec: Tightness}We start by clarifying the conditions of Theorem
\ref{thm: main}. We assume that the discrete domains $\Od$ converge
to a simply-connected domain $\Omega$ in the sense of Carathéodory,
that $b^{(1)},\dots,b^{(2n)}\in\partial\Omega$ are boundary points
(degenerate prime ends), and that $b^{(1,\delta)},\dots,b^{(2n,\delta)}\in\partial\Od$,
as above, converge to $b^{(1)},\dots,b^{(2n)}$ respectively. In order
to avoid the situation of $b^{(i,\delta)}$ being inside a deep fjord
of $\Od$ that disappears in $\Omega$, forcing the initial segment
of the corresponding interface to wiggle outside $\Omega$, we need
to impose a regularity assumption on the approximations near $b^{(i)}.$
It is actually convenient to state this assumption in terms of the
behavior of the interface, namely, we require the for any $\eps>0,$
there is a neighborhood $U^{\eps}$ of $b^{(i)}$ in $\Omega$ and
a sequence of neighborhoods $U^{\eps,\delta}\stackrel{\text{Cara}}{\longrightarrow}U^{\eps}$
such that 
\begin{equation}
\P\left(\mathrm{diam}\left(\gamma_{[0,T^{\eps,\delta}]}^{(i,\delta)}\right)>\eps\right)<\eps\label{eq: bdry_reg}
\end{equation}
for all $\delta$ small enough. Here $\gamma_{[0,t]}^{(i,\delta)}$
is the initial segment of the interface starting at $b^{(i,\delta)}$,
and $T^{\eps,\delta}$ is the exit time from $U^{\eps,\delta}$. It
is clear that one can enforce this property by a suitable geometric
condition. 

Let $N$ be the (random) number of steps after which the interface
$\gamma_{t}^{\delta}$ starting at $b^{(1,\delta)}$ first exits the
domain, by which we mean that $\gamma_{N}^{\delta}\notin\Od$ and
the medial edges $\gamma_{N}^{\delta}$ has one of $b^{(i,\delta)}$
on its right; for topological reasons, this (random) index $i$ is
even, and we denote it by $I$. We have the following lemma: 
\begin{lem}
\label{lem: tightness}The family of random curves $\gamma_{[0,N]}^{\delta}$
is tight in the space of continuous planar curves taken up to re-parametrization.
Moreover, conditionally on $I$, any of its sub-sequential limits,
mapped to the upper half-plane $\H$ by a conformal map that sends
$b^{(I)}$ to infinity, is almost surely fully described by its chordal
Loewner chain, which has a continuous driving process. 
\end{lem}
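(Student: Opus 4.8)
The plan is to establish tightness via the Kemppainen--Smirnov criterion from \cite{AnttiStas}, which reduces the whole statement to a single geometric estimate on the interface. First I would recall that a family of random interfaces, parametrized by capacity after mapping to $\H$, is tight (and every subsequential limit is a.s.\ described by a continuous Loewner driving process) provided one verifies the crossing condition of \cite{AnttiStas}: roughly, that the conditional probability, given an initial segment $\gamma_{[0,t]}^{\delta}$, of the curve making an unexpected crossing of a topological annulus before exiting it is bounded away from $1$, uniformly in $\delta$ and in the conditioning. This is their \emph{Condition G2} (a geometric bound), and it is well known to be equivalent to the conclusions stated here, namely tightness in the curve topology up to reparametrization, absence of multiple points accumulating on the boundary, and existence of a continuous driving term.

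The heart of the matter is then to verify this crossing condition for the FK--Ising exploration interface $\gamma^{\delta}$ under the alternating wired/free boundary conditions. Here I would invoke the Russo--Seymour--Welsh estimates for the critical FK--Ising model proved in \cite{chelkak2016crossing} (see also \cite{duminil2011connection}). By the domain Markov property described in Section \ref{sec: The-FK-Ising-model}, conditioning on $\gamma_{[0,t]}^{\delta}$ yields again a critical FK--Ising model in the slit domain $\Od_{t}$ with boundary conditions $\bcond_{t}^{\delta}$; the RSW bounds then give, uniformly over all realizations of the boundary and of $\delta$, a positive lower bound on the probability that there is a primal (or dual) crossing of a fixed conformal annulus separating the tip of the curve from the rest of its excursion. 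Such a crossing forces the interface \emph{not} to traverse the annulus, which is exactly the content of the Kemppainen--Smirnov condition. The key point is that RSW bounds are insensitive to boundary conditions, so they hold uniformly regardless of how complicated $\bcond_{t}^{\delta}$ has become; this uniformity is what upgrades the pointwise estimate to tightness.

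A technical issue to dispose of separately is the behaviour near the \emph{starting} point $b^{(1,\delta)}$ and near the terminal point $b^{(I)}$. Near the start, the interface sits exactly at the junction between a wired and a free arc, and the regularity assumption \eqref{eq: bdry_reg} on the approximating domains is precisely what guarantees that the initial segment does not wander far outside $\Omega$ through a vanishing fjord; this lets me control the conformal image of $\gamma^{\delta}$ near $b^{(1)}$ and ensures the driving process is well defined from time $0$. Near the terminus, I condition on the (even) exit index $I$ and map $b^{(I)}$ to infinity, so the curve becomes a genuine chordal configuration in $\H$ and the Loewner chain runs up to the (random) capacity at which $\gamma^{\delta}$ exits. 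I expect the main obstacle to be the bookkeeping, rather than a deep estimate: one must check that the RSW input of \cite{chelkak2016crossing} applies uniformly even when $\Od_{t}$ degenerates (e.g.\ when the interface nearly pinches off a component), and that conditioning on $I$ does not destroy the uniformity. Both are handled by the observation, used repeatedly later in Section \ref{sec: Proof}, that an almost-disconnection is, with high probability, immediately completed into an actual disconnection with nothing of macroscopic consequence happening in between, so the crossing condition survives the conditioning.
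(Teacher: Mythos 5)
Your overall framework coincides with the paper's: reduce the lemma to the Kemppainen--Smirnov crossing condition of \cite{AnttiStas} and verify that condition using RSW bounds for critical FK--Ising together with the domain Markov property and monotonicity in boundary conditions. The unconditional part of your argument (an unforced crossing of a quad or annulus of large modulus by $\gamma^{\delta}$ forces a primal crossing on its wired side and a dual crossing on its free side, each of which is unlikely uniformly in boundary conditions) is essentially the standard one-curve argument and is fine as far as it goes. But there is a genuine gap exactly at the point which is the crux of the paper's proof: the conditioning on the random exit index $I$. Since the endpoint of $\gamma_{[0,N]}^{\delta}$ is random, the machinery of \cite{AnttiStas} (which presupposes a prescribed target) must be applied to the \emph{conditional} law of the curve given $I$, so the crossing estimate is needed uniformly for these conditional laws. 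Conditioning on $\{I=j\}$ can inflate the probability of any event by a factor $1/\P(I=j)$, and $\P(I=j)$ is not bounded below uniformly in the domain, the boundary data, or $\delta$; moreover, the law of the interface conditioned on its endpoint is no longer an FK--Ising interface law, so the phrase ``RSW bounds are insensitive to boundary conditions'' cannot be applied to it directly. Your proposed resolution --- that the conditioning is harmless because an almost-disconnection is with high probability completed into an actual disconnection --- is the mechanism of Lemma \ref{lem: stability_at_hitting}, which serves a different purpose (controlling the curve at the disconnection time, for the induction in Section \ref{sec: Proof}); it does not produce a bound on $\P[\gamma_{[0,N]}^{\delta}\ \mathrm{crosses}\ Q^{\delta}\mid I]$.

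What is actually needed, and what the paper does following \cite[Lemma 4.5]{karrila2019multiple}, is a device that converts \emph{unconditional}-type RSW estimates into estimates conditional on $I$: split the quad $Q^{\delta}$ into halves $Q_{L}^{\delta},Q_{R}^{\delta}$ of modulus $M/2$, introduce the \emph{other} interfaces $\gamma^{i,\delta}$ ($i$ odd) and the event $A$ that none of those with $i>I$ crosses $Q_{L}^{\delta}$, and use that $I$ is measurable both with respect to $\sigma(\Omega_{1}^{\delta})$ (where $\Omega_{1}^{\delta}$ is the component of the complement of the other interfaces containing $b^{(1,\delta)}$) and with respect to $\sigma(\gamma_{[0,N]}^{\delta})$. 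Conditionally on $\Omega_{1}^{\delta}$, the model in $\Omega_{1}^{\delta}$ has Dobrushin-type boundary conditions, and on $A$ every piece of $\pa\Omega_{1}^{\delta}$ meeting $Q_{R}^{\delta}$ is free, so by monotonicity and RSW the event ``$A$ and $\gamma^{\delta}$ crosses $Q_{R}^{\delta}$'' has conditional probability at most $\frac14$; conditionally on $\gamma_{[0,N]}^{\delta}$, the event $A^{c}$ forces a dual crossing of $Q_{L}^{\delta}$ in a region whose relevant boundary is wired, so it too has conditional probability at most $\frac14$. Because each bound is uniform and each conditioning $\sigma$-algebra refines $\sigma(I)$, the tower property and a union bound give $\P[\gamma_{[0,N]}^{\delta}\ \mathrm{crosses}\ Q^{\delta}\mid I]\le\frac12$, uniformly. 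Without this step (or an equivalent one), the passage from your RSW estimate to the estimate for the conditional laws given $I$ is missing, and the tightness claim of the lemma does not follow.
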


\begin{proof}
Similarly to the argument for one curve given in \cite{ChelkakSmirnov_et_al},
we rely on \cite{AnttiStas}. The only difference is that in \cite{AnttiStas},
the target point is prescribed. Clearly, it suffices to prove the
tightness of the conditional laws of $\gamma_{[0,N]}^{\delta}$ given
$I$. Karrila \cite[Lemma 4.5]{karrila2019multiple} has shown that
in general, the conditions in \cite{AnttiStas} are not affected by
conditioning on the target; below we more or less follow his proof. 

By \cite[Theorem 1.5 and Corollary 1.8]{AnttiStas}, see also \cite{karrila2018limits},
it suffices to prove a uniform upper bound on the probability (given
$I$) of a crossing of a topological rectangle of modulus $\geq M$
with two opposite sides on the boundary that does not disconnect $b^{(1,\delta)}$
from $b^{(I,\delta)},$ with $M$ an absolute constant. 

Let $Q^{\delta}$ be such a rectangle, for definiteness, let $Q^{\delta}\cap\pa\Od\subset(b^{(1,\delta)},b^{(I,\delta)})$,
and split it into two rectangles $Q_{L}^{\delta},Q_{R}^{\delta}$
of moduli $M/2$, such that $\gamma_{[0,N]}^{\delta}$ crosses $Q_{R}^{\delta}$
only if it crosses $Q_{L}^{\delta}$ and it crosses $Q^{\delta}$
iff it crosses both. Let $\gamma_{t}^{i,\delta},$ $i=3,\dots,2n-1$
be the interface started at $b^{(i,\delta)}$ and stopped at its corresponding
$N_{i}$. Let $\Omega_{1}^{\delta}$ be the connected component of
$\Od\setminus\left(\gamma_{[0,N_{3}]}^{3,\delta}\cup\dots\cup\gamma_{[0,N_{2n-1}]}^{2n-1,\delta}\right)$
that has $b^{(1,\delta)}$ on its boundary. If $\gamma_{[0,N]}^{\delta}$
crosses $Q_{R}^{\delta}$, then there is an open FK percolation path
crossing of $Q_{R}^{\delta}$ inside $\Omega_{1}^{\delta}$. 

Let $A$ denote the event that none of $\gamma_{[0,N_{i}]}^{i,\delta}$
with $i>I$ crosses $Q_{L}^{\delta}$. Conditionally on $\Omega_{1}^{\delta}$,
the configuration in $\Omega_{1}^{\delta}$ is that of FK--Ising
model with free boundary conditions on the sub-arc $(b^{(1,\delta)},b^{(I,\delta)})$
of $\partial\Omega_{1}^{\delta}$ and wired boundary conditions on
$(b^{(I,\delta)},b^{(1,\delta)}).$ On $A$, any part of $\pa\Od_{1}$
that intersects $Q_{R}^{\delta}$ has free boundary conditions. Therefore,
by monotonicity in the boundary conditions, $\P(A\text{ and }\gamma_{[0,N]}^{\delta}\;\mathrm{crosses}\;Q_{R}^{\delta}|\Od_{1})$
is smaller than the probability, for the FK model in $Q_{R}^{\delta}$
with plus boundary conditions, to have an open path crossing $Q_{R}^{\delta}$.
By RSW bound \cite{duminil2011connection}, this probability is smaller
than $\frac{1}{4}$ if $M$ is large enough. 

Let $\Omega_{2}^{\delta}$ be the union of connected components of
$\Od\setminus\gamma_{[0,N]}^{\delta}$ that have $b^{(I+1,\delta)},\dots,b^{(2n,\delta)}$
on their boundaries. The event $A^{c}$ implies that there is a crossing
of $Q_{L}^{\delta}$ by dual-open edges in $\Omega_{2}^{\delta}$.
Given $\gamma_{[0,N]}^{\delta}$, the law of the model in $\Omega_{2}^{\delta}$
is that of the FK--Ising model with wired boundary conditions on
$\left(\partial\Od_{2}\setminus\partial\Od\right)\cup\bcond_{\wired}^{\delta}$
and free on $\bcond_{\free}^{\delta}$. In particular, any part of
$\pa\Od_{2}$ that intersects $Q_{L}^{\delta}$ carries wired boundary
conditions, and, by monotonicity and RSW again, we conclude that $\P(A^{c}|\gamma_{[0,N]}^{\delta})<\frac{1}{4}$
if $M$ is large enough. Since $I$ is measurable both with respect
to $\Od_{1}$ and $\gamma{}_{[0,N]}^{\delta},$ we conclude 
\[
\P\left[\gamma_{[0,N]}^{\delta}\;\mathrm{crosses}\;Q_{R}^{\delta}|I\right]\leq\P\left[A\text{ and }\gamma_{[0,N]}^{\delta}\;\mathrm{crosses}\;Q_{R}^{\delta}|I\right]+\P\left[A^{c}|I\right]\leq\frac{1}{2}.
\]
\end{proof}
\begin{rem}
The reader may notice a little subtlety in that in order to apply
the results of \cite{AnttiStas}, $M$ must be independent of the
domain, and in particular of the probabilities $\P(I=j).$

We now turn to the identification of the scaling limit. To this end,
we fix a conformal map $\varphi:\Omega\to\H$. We assume that $\varphi$
maps \textbf{$b^{(1)},\dots,b^{(2n)}$} to $b_{0}^{(1)}<\dots<b_{0}^{(2n)}\in\R$.
Fix any cross-cut $\omega$ in $\H$ that separates $b_{0}^{(1)}$
from $b_{0}^{(2)},\dots,b_{0}^{(2n)},\infty$. We let $\gamma$ be
any sub-sequential limit of the law of $\gamma_{[0,N]}^{\delta}$.
Parametrize $\gamma$ by half-plane capacity of $\gamma_{[0,t]}^{\H}:=\varphi(\gamma_{[0,t]})$,
which is possible at least until $t=T_{\omega}:=\min\{t:\gamma_{[0,t]}^{\H}\cap\omega\neq\emptyset\}.$
Let $\H_{t}$ be the unbounded connected component of $\H\setminus\gamma_{[0,t]}^{\H}$
and let $g_{t}:\H_{t}\to\H$ be the Loewner maps, which satisfy the
Loewner's equation 
\[
\partial_{t}g_{t}(z)=\frac{2}{g_{t}(z)-b_{t}^{(1)}},
\]
where $b_{t}^{(1)}$ is the random driving function. We denote $b_{t}^{(i)}:=g_{t}(b_{0}^{(i)}).$
The domain $\H_{t}$ comes with natural boundary conditions, changing
from wired to free at $\gamma_{t}^{\H}$, $b_{0}^{(3)},\dots,b_{0}^{(2n-1)}$
and back at $b_{0}^{(2)},\dots,b_{0}^{(2n)}$, and we denote by $\bcond_{t}$
the push-forward of these boundary conditions to $\H$ by $g_{t}$.
Thus, we have 
\begin{equation}
f_{\H_{t}}(z)=f_{\H,\bcond_{t}}(g_{t}(z))g_{t}'(z)^{\frac{1}{2}}.\label{eq: coov_UHP}
\end{equation}
for the scaling limit $f$ of the martingale observable as defined
in Section \ref{sec: The-martingale-observable}. A crucial consequence
of the discrete martingale property (Lemma \ref{lem: mart}) and the
convergence result is the following lemma:
\end{rem}

\begin{lem}
For each $z\in\H$ separated from $b_{0}^{(1)}$ by $\omega$, the
process $f_{\H_{t\wedge T_{\omega}}}(z)$ is a martingale. 
\end{lem}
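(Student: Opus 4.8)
The plan is to pass from the discrete martingale property established in Lemma~\ref{lem: mart} to the continuous one via the convergence result recorded after that lemma. First I would fix $z\in\H$ on the far side of the cross-cut $\omega$ and a sequence $\delta\to0$ along which $\gamma_{[0,N]}^{\delta}$ converges in law to $\gamma$; by the Skorokhod representation theorem I may assume this convergence is almost sure in the curve topology of Lemma~\ref{lem: tightness}. The key geometric input is that, since $z$ is separated from $b_0^{(1)}$ by $\omega$ and we stop at $T_\omega$, the point $\varphi^{-1}(z)$ stays in the same connected component of $\Od_t$ as $b^{(2,\delta)},\dots,b^{(2n,\delta)}$ up to the stopping time, so Lemma~\ref{lem: mart} applies: for fixed stopping levels $s<t$ (in the capacity parametrization), the discrete quantity $F_{\Od^\delta_{t\wedge T_\omega}}(z^\delta)$ is an $\F_t$-martingale.

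Next I would upgrade this to a statement about $f$. By the convergence theorem of \cite{IzyurovFree} quoted above, the rescaled observable $2^{-1/4}\pi^{1/2}\delta^{-1/2}F_{\Od^\delta_{t}}$ converges to $f_{\H_{t}}(z)$ whenever $(\Od^\delta_t,\bcond^\delta_t)\stackrel{\mathrm{Cara}}{\longrightarrow}(\H_t,\bcond_t)$. The almost sure convergence of the stopped curves, together with the continuity of the Loewner flow in the driving function, gives Carath\'eodory convergence of the stopped domains $\Od^\delta_{t\wedge T_\omega}\to\H_{t\wedge T_\omega}$ for each fixed capacity time $t$, and hence pointwise-in-$t$ convergence of the rescaled discrete martingale to $f_{\H_{t\wedge T_\omega}}(z)$. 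To conclude that the limit is a martingale I would check the conditional-expectation identity $\E[f_{\H_{t\wedge T_\omega}}(z)\mid\F_s]=f_{\H_{s\wedge T_\omega}}(z)$ by passing to the limit in the corresponding discrete identity, which requires interchanging the limit $\delta\to0$ with the conditional expectation.

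The main obstacle will be justifying that interchange, i.e.\ obtaining a uniform integrability (or uniform boundedness) bound on the rescaled observables $2^{-1/4}\pi^{1/2}\delta^{-1/2}F_{\Od^\delta_{t\wedge T_\omega}}(z)$ along the sequence. The natural route is an a~priori bound: because $z$ is held at a fixed positive distance from the boundary of $\H_t$ and from $b_0^{(1)}$ by the cross-cut $\omega$ (so that $T_\omega$ genuinely insulates $z$ from the tip), one expects $|F_{\Od^\delta_{t\wedge T_\omega}}(z)|$ to be bounded uniformly in $\delta$ by a constant depending only on $\dist(z,\omega)$ and the geometry. Such a bound follows from the standard discrete-complex-analysis estimates underlying the convergence result of \cite{IzyurovFree} (boundedness of the normalized fermionic observable away from the singular points $b^{(i,\delta)}$ and from the tip), combined with the RSW-based control already used in Lemma~\ref{lem: tightness}; alternatively, the Ising-correlation representation derived in the proof of Lemma~\ref{lem: mart} exhibits $F$ as an expectation of a bounded random variable, giving a crude but sufficient uniform bound. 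With uniform integrability in hand, the martingale property passes to the limit and the lemma follows.
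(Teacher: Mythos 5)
Your overall strategy (discrete martingale property from Lemma \ref{lem: mart} + almost sure convergence via Skorokhod + uniform integrability to interchange limit and conditional expectation) is the same as the paper's, but the central technical step is asserted rather than proved, and it is exactly where the difficulty lies. You claim that almost sure convergence of the curves, ``together with the continuity of the Loewner flow,'' gives Carath\'eodory convergence of the stopped domains $\Od_{t\wedge T_{\omega}}\to\H_{t\wedge T_{\omega}}$. Stopping at the hitting time of a fixed cross-cut is \emph{not} a continuous operation on curves: under uniform perturbation the hitting time of $\omega$ can jump (the limit curve may touch $\omega$ tangentially while the approximating curves miss it, or hit it much earlier), so the stopped discrete curves need not converge to the stopped limit curve at all. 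The paper circumvents precisely this by replacing $T_{\omega}$ with a \emph{continuous modification} $\tau_{\omega}\geq T_{\omega}$ of the hitting time (following Karrila, Appendix B of \cite{karrila2019multiple}), constructing discrete stopping times $\tau_{\omega}^{\delta}\to\tau_{\omega}$ a.s., proving the martingale property up to $\tau_{\omega}$, and then deducing the statement for $T_{\omega}$. In addition, to apply the convergence theorem of \cite{IzyurovFree} at each time one needs convergence of the slit domains \emph{together with their boundary data}, in particular that the point $b_{t}^{(1,\delta)}$ separating the wired from the free arc converges to $\gamma_{t}$; the paper establishes this by a compactness/contradiction argument combined with the fact that the limit curve a.s.\ does not traverse a piece of boundary in zero time. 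Your proposal is silent on both points, and they are the substance of the proof rather than routine verifications.

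On uniform integrability: your first route (a priori uniform boundedness of the normalized observable away from the marked points and the tip, i.e.\ the uniformity built into the convergence result) is the correct one and is essentially what the paper invokes. Your proposed ``alternative'' via the Ising-correlation representation from Lemma \ref{lem: mart} does not work: the quantity to be bounded is the \emph{rescaled} observable $\delta^{-1/2}F_{\Od_{t\wedge T_{\omega}}}(z)$ (up to constants), and the factor $\delta^{-1/2}$ diverges; moreover, even the unrescaled $F$ is not exhibited there as an expectation of a uniformly bounded random variable, since the integrand contains the product $\prod\alpha^{\sigma_{x}\sigma_{y}}$ over the $O(\delta^{-1})$ edges crossing the path $\gamma^{\bullet}$, each factor being as large as $\alpha^{-1}=\sqrt{2}+1>1$. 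So that fallback must be discarded, and the boundedness has to come from discrete complex analysis (or from compactness of the relevant family of domains, as in the paper), not from the correlation formula.
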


\begin{proof}
This is a standard argument featuring e. g. in \cite{werner2007lectures}.
We may assume, by Skorokhod representation theorem, that the interfaces
$\gamma_{[0,N]}^{\delta}$ are all defined on the same probability
space and converge almost surely to a random curve $\gamma_{t}\subset\Omega$.
Define $\hat{\omega}:=\varphi^{-1}(\omega)$ so that $w:=\varphi^{-1}(z)$
is separated from $b^{(1)}$ by $\hat{\omega}$. For convenience,
we re-parametrize $\gamma_{t}$ and the discrete interfaces $\gamma_{t}^{\delta}$
by the conformal radius of the connected component of their complement
containing $w$, in $\Omega$ and $\Od$ respectively. We define $\tau_{\omega}\geq T_{\omega}$
to be a continuous modification of the hitting time of $\hat{\omega}$,
as in \cite[Appendix B]{karrila2019multiple}. We moreover define
$\tau_{\omega}^{\delta}$ to be a stopping time with respect to the
filtration $\F(\gamma_{[0,t]}^{\delta})$ on discrete curves that
converges to $\tau_{\omega}$ almost surely, which can be achieved
by a similar construction. 

We claim that, for any $t$, we have $F_{\Omega_{t\wedge\tau_{\omega}^{\delta}}^{\delta}}(w)\to f_{\Omega_{t\wedge\tau_{\omega}}}(w)$
almost surely. Indeed, on the complementary event, we can extract
a subsequence of $\delta$ along which 
\begin{equation}
\left|F_{\Omega_{t\wedge\tau_{\omega}^{\delta}}^{\delta}}(w)-f_{\Omega_{t\wedge\tau_{\omega}}}(w)\right|\geq c>0.\label{eq: contrad}
\end{equation}
 From that subsequence, by compactness, we can extract further subsequence
such that $\Omega_{t\wedge\tau_{\omega}^{\delta}}^{\delta}$ converges
in the sense of Carathéodory, and, moreover, the boundary conditions
$\bcond_{t}^{\delta}$ converge (i. e., the points $b_{t}^{(1,\delta)}$
on $\pa\Omega_{t\wedge\tau_{\omega}^{\delta}}^{\delta}\setminus\pa\Omega^{\delta}$
separating the wired arc from the free arc converge). It is then clear
that almost surely, the limit $\Omega_{t\wedge\tau_{\omega}^{\delta}}^{\delta}$
is given by $\Omega_{t\wedge\tau_{\omega}}.$ Also, almost surely,
$\lim b_{t}^{(1,\delta)}=\gamma_{t}$; on the complementary event,
$\gamma_{t}$ would have traversed part of the boundary in zero time
which we know has probability zero. But now the convergence result
of \cite[Theorem 2.6]{IzyurovFree} yields that (\ref{eq: contrad})
is impossible.

Let $H$ be any bounded, continuous function of the following data:
a simply-connected domain $\Omega$ equipped with a point $w\in\Omega$
and boundary conditions $\bcond$ defined by $2n$ marked points $b^{(1)},\dots,b^{(2n)}$
as above. Using that by compactness, all functions under the expectation
are bounded, we have 
\begin{multline*}
\E\left[f_{\Omega_{t\wedge\tau_{\omega}}}(w)H(\Omega_{s\wedge T_{\omega}})\right]=\lim_{\delta\to0}\E\left[F_{\Omega_{t\wedge\tau_{\omega}^{\delta}}^{\delta}}(w)H(\Omega_{s\wedge\tau_{\omega}^{\delta}}^{\delta})\right]\\
=\lim_{\delta\to0}\E\left[\E\left[F_{\Omega_{t\wedge\tau_{\omega}^{\delta}}^{\delta}}(w)|\Omega_{s\wedge\tau_{\omega}^{\delta}}^{\delta}\right]H(\Omega_{s\wedge\tau_{\omega}^{\delta}}^{\delta})\right]\\
=\lim_{\delta\to0}\E\left[F_{\Omega_{s\wedge\tau_{\omega}^{\delta}}^{\delta}}(w)H(\Omega_{s\wedge\tau_{\omega}^{\delta}}^{\delta})\right]=\E\left[f_{\Omega_{s\wedge\tau_{\omega}}}(w)H(\Omega_{s\wedge\tau_{\omega}})\right],
\end{multline*}
proving that $f_{\Omega_{t\wedge\tau_{\omega}}}(w)$ is a martingale
with respect to $\F(\Omega_{t\wedge\tau_{\omega}})$ and therefore
$f_{\H_{t\wedge\tau_{\omega}}}(z)=\varphi'(w)^{-\frac{1}{2}}f_{\Omega_{t\wedge\tau_{\omega}}}(w)$
is a martingale.
\end{proof}
It is clear from the explicit description of $f_{\H,\bcond}(z)$ that
it is continuous in $z$ and $\bcond$; hence the martingale from
the last Lemma is jointly continuous in $t$ and $z$.

\section{Proof of the main theorem.\label{sec: Proof}}
\begin{lem}
\label{lem: semimart}For any sub-sequential scaling limit $\gamma_{t}$
of the interface $\gamma_{t}^{\delta}$ , the driving process $b_{t\wedge T_{\omega}}^{(1)}$
is a (continuous) semi-martingale.
\end{lem}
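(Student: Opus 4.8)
The plan is to extract the semimartingale property of $b_t^{(1)}$ from the continuous martingales $f_{\H_{t\wedge T_\omega}}(z)$, by reading off information from the expansion of the observable near the tip of the curve. The key point is that the image of the tip under $g_t$ is exactly $b_t^{(1)}$, so the local behavior of $f_{\H_t}$ near $b_t^{(1)}$ encodes the driving process.

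First I would examine the covariance rule \eqref{eq: coov_UHP} together with the expansion \eqref{eq: f_expansion}. Writing $f_{\H_t}(z) = f_{\H,\bcond_t}(g_t(z))\, g_t'(z)^{1/2}$ and substituting the expansion of $f_{\H,\bcond_t}$ around $b_t^{(1)}$, one obtains, for $z$ near the tip, an expression of the form
\[
f_{\H_t}(z) = \i\,(g_t(z)-b_t^{(1)})^{-\frac12} g_t'(z)^{\frac12}\left(1 + 2\mathcal{A}_{\bcond_t}\cdot(g_t(z)-b_t^{(1)}) + o(\cdot)\right).
\]
The idea is to isolate two scalar quantities from this local data, namely the coefficients of the leading and subleading terms as functions of $t$. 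Concretely, I would fix a small contour around a point and use contour integration to extract, from the martingale $f_{\H_{t\wedge T_\omega}}(z)$, the coefficients in the Laurent-type expansion at $b_t^{(1)}$; since a contour integral of a martingale (in $z$, against a fixed kernel) is again a martingale, this produces two continuous martingales whose values are explicit functions of $b_t^{(1)}$, $b_t^{(i)}$, $g_t$, and $\mathcal{A}_{\bcond_t}$.

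The crucial step is to combine these with the deterministic Loewner flow. The maps $g_t$ and the points $b_t^{(i)} = g_t(b_0^{(i)})$ for $i \geq 2$ evolve by Loewner's equation $\partial_t g_t(z) = 2/(g_t(z)-b_t^{(1)})$, which is driven by $b_t^{(1)}$ itself; so all the ``auxiliary'' quantities are smooth (hence finite-variation) functionals of the path $s\mapsto b_s^{(1)}$ up to time $t$. Matching the martingale obtained by contour integration against its expression in terms of $b_t^{(1)}$ and these finite-variation quantities, I would arrive at an identity asserting that a specific continuous process, of the form $b_t^{(1)} + (\text{finite-variation drift})$, is a martingale. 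Rearranging shows $b_t^{(1)}$ is the sum of a continuous local martingale and a continuous process of finite variation, which is precisely the semimartingale decomposition sought.

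\textbf{The main obstacle} I anticipate is justifying that the contour-integral extraction is legitimate and yields genuinely continuous martingales rather than merely formal objects: one must ensure the expansion \eqref{eq: f_expansion} holds uniformly enough (in $t$, up to $T_\omega$) that the coefficient-extraction commutes with the conditional expectation, and that the resulting processes inherit continuity from the joint continuity of $f_{\H_{t\wedge T_\omega}}(z)$ in $t$ and $z$ noted above. A secondary technical care is that $b_t^{(1)}$ must be shown a priori continuous and that the contour stays in the region where $z$ is separated from $b_0^{(1)}$ by $\omega$, so that the martingale property is available; this is handled by working up to the stopping time $T_\omega$ and choosing the contour inside the separated region.
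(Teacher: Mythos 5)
Your overall strategy---representing $b_t^{(1)}$ through contour integrals of the martingale observable, and exploiting that the Loewner data $g_t(z)$ and $b_t^{(i)}$, $i\geq 2$, evolve by finite-variation dynamics---is the right circle of ideas, but the central step fails as stated, and it fails for the very reason this lemma is needed. Extracting the coefficients of the expansion of $f_{\H_t}$ at the tip requires integrating against kernels that become Laurent-type monomials after the change of variables $w=g_t(z)$, i.e.\ kernels of the form $g_t'(z)^{\frac12}(g_t(z)-b_t^{(1)})^{k+\frac12}$; these are random, time-dependent, and depend on $b_t^{(1)}$ itself. So your assertion that the extraction is an integral of the martingale $f_{\H_{t\wedge T_\omega}}(z)$ ``against a fixed kernel'' is self-contradictory: with a genuinely fixed kernel the integral is indeed a martingale, but it admits no explicit evaluation (the pull-back to the $w$-plane involves $g_t^{-1}$, which is not meromorphic across the real line near $b_t^{(1)}$), whereas with the moving kernels the integrals do evaluate explicitly (to $0$, $-\pi$, $-2\pi\mathcal{A}_{\bcond_t}$, cf.\ the proof of Lemma \ref{lem: local_conv}) but the ``integral of a martingale is a martingale'' argument no longer applies: differentiating such expressions in time requires It\^o's formula for functions of $b_t^{(1)}$, which presupposes exactly the semimartingale property you are trying to prove. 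Your matching step is therefore circular. Note also that the identities produced by a correct extraction are not of the form ``$b_t^{(1)}$ plus finite variation equals a martingale'': the subleading coefficient is $\mathcal{A}_{\bcond_t}$, which is a smooth function of $b_t^{(1)}$ (among the other marked points) and hence is \emph{not} a finite-variation process; being a smooth pointwise function of a rough path does not confer finite variation, contrary to what your ``rearranging'' step assumes. A further obstruction is that Laurent expansion and residue calculus cannot be applied to $f$ itself at all, since $f_{\H,\bcond_t}$ has a square-root branch point, not a pole, at $b_t^{(1)}$, so it is not single-valued on a closed contour around that point.

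The paper breaks the circularity by squaring the observable: $f_{\H,\bcond_t}^2$ is a rational function with a simple pole of residue $-1$ at $b_t^{(1)}$, which (with Schwarz reflection) gives the exact identity $b_t^{(1)}=-\frac{1}{\pi\i}\int_{\omega_1}g_t(z)f_{\H_t}^2(z)\,dz$ over a contour $\omega_1$ that is \emph{fixed} in the $z$-plane. In this representation the only time-dependent factor besides the martingale $f_{\H_t}(z)$ is $g_t(z)$, which for fixed $z$ away from the hull is $C^1$ in $t$ by Loewner's equation; hence It\^o's formula applies to the product $g_t(z)f_{\H_t}^2(z)$ using only the established martingale property of $f_{\H_t}(z)$, with no a priori regularity of $b_t^{(1)}$ required, and integrating the resulting decomposition over $z\in\omega_1$ yields the semimartingale property. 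If you replace your coefficient-extraction step by this single identity, the remainder of your outline---including your correct observations that $g_t(z)$ and $b_t^{(i)}$, $i\geq2$, are finite variation, and that one must keep the contour in the region where the martingale property holds and work up to $T_\omega$---goes through; the full coefficient extraction with the moving kernels $H_k(t,z)$ is then legitimate, but it belongs to the next step (Lemma \ref{lem: local_conv}), where It\^o's formula is justified precisely by the present lemma.
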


\begin{proof}
Let $\omega_{1}$ be a cross-cut in the upper half-plane such that
$\omega_{1}\cup\overline{\omega_{1}}$ is a loop encircling $b_{0}^{(1)}$
and $\omega$, but no other marked points. Using (\ref{eq: f_expansion})
and Schwartz reflection, we can write, for $t\leq T_{\omega},$ 
\begin{multline*}
b_{t}^{(1)}=-\frac{1}{\pi i}\int_{g_{t}(\omega_{1})}wf_{\H,\bcond_{t}}^{2}(w)dw=-\frac{1}{\pi i}\int_{\omega_{1}}g_{t}(z)f_{\H,g_{t}(\bcond_{t})}^{2}(g_{t}(z))g_{t}'(z)dz\\
=-\frac{1}{\pi i}\int_{\omega_{1}}g_{t}(z)f_{\H_{t}}^{2}(z)dz.
\end{multline*}
Clearly, since $f_{\H_{t}}(z)$ is a continuous martingale, $g_{t}(z)f_{\H_{t}}^{2}(z)$
is a semi-martingale for each $z$; indeed, by Itô's formula, it is
a sum of the martingale 
\begin{equation}
\int_{0}^{t\wedge T_{\omega}}g_{s}(z)2f_{\H_{s}}(z)df_{\H_{s}}(z)\label{eq: martingale_part}
\end{equation}
 and the adapted bounded variation process 
\begin{equation}
\int_{0}^{t\wedge T_{\omega}}\partial_{s}g_{s}(z)f_{\H_{s}}^{2}(z)ds+\int_{0}^{t\wedge T_{\omega}}g_{s}(z)[df_{\H_{s}}(z)]^{2}.\label{eq: adapted_bvp}
\end{equation}
Integrating (\ref{eq: martingale_part}) and (\ref{eq: adapted_bvp})
in $z$ yields a martingale and an adapted bounded variation process
respectively, hence $b_{t}^{(1)}$ is a semi-martingale.
\end{proof}
\begin{lem}
\label{lem: local_conv}For any sub-sequential scaling limit $\gamma_{t}$
of the interface $\gamma_{t}^{\delta}$ , there is a Brownian motion
$B_{t}$ such that, for any cross-cut $\omega$ separating $b_{0}^{(1)}$
from $b_{0}^{(2)},\dots,b_{0}^{(2n)},\infty$, one has
\begin{equation}
b_{t\wedge T_{\omega}}^{(1)}=\sqrt{\frac{16}{3}}B_{t\wedge T_{\omega}}+\frac{16}{3}\int_{0}^{t\wedge T_{\omega}}\partial_{b^{(1)}}\log Z(b_{t}^{(1)},\dots,b_{t}^{(2n)})\,dt,\label{eq: driving_process}
\end{equation}
where the partition function $Z$ is given by (\ref{eq: partition_function}).
\end{lem}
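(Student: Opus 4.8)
The plan is to identify the diffusivity and the drift of the driving process by combining the semimartingale property from Lemma \ref{lem: semimart} with the martingale property of the observable, reading both coefficients off the two most singular terms of the expansion of $f_{\H_{t}}$ at the tip. Since $b_{t\wedge T_{\omega}}^{(1)}$ is a continuous semimartingale by Lemma \ref{lem: semimart}, I first write its canonical decomposition $db_{t}^{(1)}=dN_{t}+\rho_{t}\,dt$, where $N$ is a continuous local martingale and $\rho_{t}$ is an adapted process, and I set $\sigma_{t}^{2}:=\tfrac{d\langle N\rangle_{t}}{dt}$. The two quantities to be pinned down are $\sigma_{t}^{2}$, which should equal $\tfrac{16}{3}$, and $\rho_{t}$, which should equal $\tfrac{16}{3}\partial_{b^{(1)}}\log Z$. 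I work throughout for $t\le T_{\omega}$, so that the marked points $b_{t}^{(i)}=g_{t}(b_{0}^{(i)})$, $i\ge 2$, stay at positive distance from $b_{t}^{(1)}$ and the cross-cut $\omega_{1}$ of Lemma \ref{lem: semimart} keeps encircling only $b_{t}^{(1)}$.

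The engine of the argument is that, for each fixed $z$, the process $f_{\H_{t}}(z)$ is a continuous martingale. Using the conformal covariance (\ref{eq: coov_UHP}) I write $f_{\H_{t}}(z)=g_{t}'(z)^{1/2}f_{\H,\bcond_{t}}(g_{t}(z))$ and apply It\^{o}'s formula, treating $f_{\H,\bcond}(w)$ as a real-analytic function of $w$ and of the marked points $b^{(1)},\dots,b^{(2n)}$. The bounded-variation inputs are the Loewner flow $\partial_{t}g_{t}(z)=\tfrac{2}{g_{t}(z)-b_{t}^{(1)}}$, $\partial_{t}g_{t}'(z)=-\tfrac{2g_{t}'(z)}{(g_{t}(z)-b_{t}^{(1)})^{2}}$, and $\partial_{t}b_{t}^{(i)}=\tfrac{2}{b_{t}^{(i)}-b_{t}^{(1)}}$ for $i\ge2$; the only martingale input, and the only source of quadratic variation, is $db_{t}^{(1)}$, which enters through the marked point $b^{(1)}$. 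Forcing the resulting drift to vanish and dividing by the common factor $g_{t}'(z)^{1/2}$ yields, with $w=g_{t}(z)$, $u=w-b_{t}^{(1)}$ and $f=f_{\H,\bcond_{t}}$, the identity
\[
\frac{2\,\partial_{w}f}{u}-\frac{f}{u^{2}}+\sum_{i\ge2}\frac{2\,\partial_{b^{(i)}}f}{b_{t}^{(i)}-b_{t}^{(1)}}+\rho_{t}\,\partial_{b^{(1)}}f+\frac{\sigma_{t}^{2}}{2}\,\partial_{b^{(1)}}^{2}f=0,
\]
valid for all $w$ in a neighbourhood of the contour.

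To extract the coefficients I expand the left-hand side in powers of $u$ near the tip using Proposition \ref{prop: f_expansion}, namely $f=\i\,u^{-1/2}(1+2\mathcal{A}_{\bcond_{t}}u+o(u))$ with $\mathcal{A}_{\bcond_{t}}=2\partial_{b^{(1)}}\log Z$; differentiating this expansion supplies the expansions of $\partial_{w}f$, $\partial_{b^{(1)}}f$ and $\partial_{b^{(1)}}^{2}f$, while the far-point sum is $O(u^{1/2})$ and so does not touch the singular orders. The most singular order $u^{-5/2}$, to which only $\tfrac{2\partial_{w}f}{u}$, $-f/u^{2}$ and $\tfrac{\sigma_{t}^{2}}{2}\partial_{b^{(1)}}^{2}f$ contribute, reads $-1-1+\tfrac{3}{8}\sigma_{t}^{2}=0$, forcing $\sigma_{t}^{2}=\tfrac{16}{3}$; the order $u^{-3/2}$ then reads $\tfrac{1}{2}\rho_{t}-\tfrac{1}{4}\sigma_{t}^{2}\mathcal{A}_{\bcond_{t}}=0$, giving $\rho_{t}=\tfrac{\sigma_{t}^{2}}{2}\mathcal{A}_{\bcond_{t}}=\tfrac{16}{3}\partial_{b^{(1)}}\log Z$. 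Since $N$ is then a continuous local martingale with $\langle N\rangle_{t}=\tfrac{16}{3}t$, L\'{e}vy's characterisation furnishes a standard Brownian motion $B$ with $N_{t}=\sqrt{16/3}\,B_{t}$, and integrating $db_{t}^{(1)}=dN_{t}+\rho_{t}\,dt$ produces exactly (\ref{eq: driving_process}).

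The main obstacle is making this coefficient extraction rigorous in the presence of the square-root branch point of $f$ at $b_{t}^{(1)}$. Rather than matching Laurent coefficients naively, I would integrate the vanishing-drift identity against the weights $(w-b_{t}^{(1)})^{3/2}$ and $(w-b_{t}^{(1)})^{1/2}$ along $\omega_{1}$, closing the contour by Schwarz reflection as in the proof of Lemma \ref{lem: semimart} so that the integrands are single-valued; these two contour integrals isolate precisely the $u^{-5/2}$ and $u^{-3/2}$ coefficients and hence the two identities above. The remaining technical care concerns the legitimacy of the It\^{o} computation: one must check that $f_{\H,\bcond}(w)$ depends real-analytically on the marked points on the relevant interval, that all processes are bounded on $[0,T_{\omega}]$ so that $N$ is a genuine (not merely local) martingale there, and that the uniform separation of $b_{t}^{(1)}$ from $b_{t}^{(2)},\dots,b_{t}^{(2n)}$ up to $T_{\omega}$ guarantees convergence of the tip expansion and validity of the residue extraction.
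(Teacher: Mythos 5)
Your computations are correct and lead to exactly the two identities the paper derives ($[db_t^{(1)}]^2=\tfrac{16}{3}dt$ from the $u^{-5/2}$ order, and drift $=\tfrac{8}{3}\mathcal{A}_{\bcond_t}dt=\tfrac{16}{3}\partial_{b^{(1)}}\log Z\,dt$ from the $u^{-3/2}$ order), but the technical packaging differs from the paper's in a meaningful way. The paper never applies It\^{o}'s formula to $f_{\H,\bcond}$ as a function of the marked points: instead it introduces the explicit kernels $H_{k}(t,z)=g_{t}'(z)^{1/2}(g_{t}(z)-b_{t}^{(1)})^{k+1/2}$, computes $dH_k$ by It\^{o} (which only requires Lemma \ref{lem: semimart}), observes that the contour integrals $\int_{\omega_1}f_{\H_t}(z)H_k(t,z)\,dz$ are deterministic constants ($0$ for $k\geq0$, $-\pi$ for $k=-1$, $-2\pi\mathcal{A}_{\bcond_t}$ for $k=-2$), and then extracts the quadratic variation and drift by differentiating these identities and taking cross-variations, using the martingale property of $f_{\H_t}(z)$ only as a black box. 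Your route --- It\^{o} applied directly to $f_{\H,\bcond_t}(g_t(z))g_t'(z)^{1/2}$ with $b^{(1)}$ as the semimartingale argument, followed by coefficient matching at the tip --- is precisely the heuristic the paper itself offers in the remark after its proof ("the reader who finds the above proof cryptic may think about it as..."), executed honestly. What your version buys is transparency: the PDE-like drift identity makes visible where each term comes from. What it costs is exactly the item you flag at the end: you must prove that $f_{\H,\bcond}(w)$ is $C^2$ (in fact real-analytic) in $b^{(1)}$ and $C^1$ in the other marked points, jointly continuously, before It\^{o} is even applicable; this is fillable (the coefficients of $P_{\bcond}$ solve the uniquely solvable linear system (\ref{eq: f_linear_1})--(\ref{eq: f_linear_2}) whose entries are algebraic in the $b^{(i)}$, so Cramer's rule gives real-analyticity), but it is an extra lemma the paper's arrangement deliberately avoids. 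One smaller point to repair: you assume at the outset that the finite-variation part and quadratic variation of $b^{(1)}$ are absolutely continuous (writing $\rho_t\,dt$ and $\sigma_t^2\,dt$), which Lemma \ref{lem: semimart} does not give you; you should instead write them as general measures $dA_t$ and $d\langle N\rangle_t$, whereupon your $u^{-5/2}$ equation yields $d\langle N\rangle_t=\tfrac{16}{3}dt$ and the $u^{-3/2}$ equation then yields $dA_t=\tfrac{8}{3}\mathcal{A}_{\bcond_t}dt$, so absolute continuity is a conclusion rather than a hypothesis --- this is how the paper's version is structured as well.
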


\begin{proof}
Let $H_{k}(t,z):=g_{t}'(z)^{\frac{1}{2}}(g_{t}(z)-b_{t}^{(1)})^{k+\frac{1}{2}}.$
A straightforward application of Itô's formula, which is valid by
Lemma \ref{lem: semimart}, yields 
\begin{multline}
dH_{k}(t,z)=\left(2kdt+\left(\frac{k^{2}}{2}-\frac{1}{8}\right)\left[db_{t}^{(1)}\right]^{2}\right)H_{k-2}(t,z)-\left(k+\frac{1}{2}\right)H_{k-1}(t,z)db_{t}^{(1)}.\label{eq: H_k_exp}
\end{multline}
For a cross-cut $\omega_{1}$ as in the proof of Lemma \ref{lem: semimart},
using (\ref{eq: ccov}) and (\ref{eq: f_expansion}), we get 
\begin{multline*}
\int_{\omega_{1}}f_{\H_{t}}(z)H_{k}(t,z)dz=\int_{\omega_{1}}f_{\H,\bcond_{t}}(g_{t}(z))(g_{t}(z)-b_{t}^{(1)})^{k+\frac{1}{2}}g_{t}'(z)dz\\
=\int_{g_{t}(\omega_{1})}f_{\H,\bcond_{t}}(w)(w-b_{t}^{(1)})^{k+\frac{1}{2}}dw=\begin{cases}
0, & k\geq0;\\
-\pi, & k=-1;\\
-2\pi\mathcal{A}_{\bcond_{t}}, & k=-2,
\end{cases}
\end{multline*}
Applying the Itô formula to this identity, and using (\ref{eq: H_k_exp})
yields, for $k\geq-1,$ 
\begin{multline}
0=\int_{\omega_{1}}df_{\H_{t}}(z)H_{k}(t,z)dz-\left(k+\frac{1}{2}\right)db_{t}^{(1)}\int_{\omega_{1}}f_{\H_{t}}(z)H_{k-1}(t,z)\\
+\left(2kdt+\left(\frac{k^{2}}{2}-\frac{1}{8}\right)\left[db_{t}^{(1)}\right]^{2}\right)\int_{\omega_{1}}f_{\H_{t}}(z)H_{k-2}(t,z)dz\\
-\left(k+\frac{1}{2}\right)\int_{\omega_{1}}[df_{\H_{t}}(z);db_{t}^{(1)}]\cdot H_{k-1}(t,z)dz.\label{eq: coeff_expansion}
\end{multline}
Take the cross-variation with $b_{t}^{(1)}$ and note that the last
two terms do not contain the Brownian part. We obtain
\[
\int_{\omega_{1}}[df_{\H_{t}}(z);db_{t}^{(1)}]\cdot H_{k}(t,z)dz=\begin{cases}
0, & k\geq1;\\
-\frac{\pi}{2}\left[db_{t}^{(1)}\right]^{2}, & k=0;\\
\pi\mathcal{A}_{\bcond_{t}}\left[db_{t}^{(1)}\right]^{2}, & k=-1.
\end{cases}
\]
 Specializing (\ref{eq: coeff_expansion}) to $k=1$ now yields 
\[
\int_{\omega_{1}}df_{\H_{t}}(z)H_{1}(t,z)dz=\pi\left(2dt-\frac{3}{8}\left[db_{t}^{(1)}\right]^{2}\right),
\]
 and since $f_{\H_{t}}(z)$ is a martingale, we get $\left[db_{t}^{(1)}\right]^{2}=\frac{16}{3}dt.$
Plugging $k=0$ into (\ref{eq: coeff_expansion}) gives
\[
\int_{\omega_{1}}df_{\H_{t}}(z)H_{0}(t,z)dz=-\frac{\pi}{2}db_{t}^{(1)}+\frac{\pi}{4}\cdot\mathcal{A}_{\bcond_{t}}\left[db_{t}^{(1)}\right]^{2}.
\]
Since since $f_{\H_{t}}(z)$ is a martingale, we get that $b_{t\wedge T_{\omega}}^{(1)}-\frac{8}{3}\int_{0}^{t\wedge T_{\omega}}\mathcal{A}_{\bcond_{t}}dt$
is a martingale. Being a martingale with quadratic variation equal
to $\frac{16}{3}t\wedge T_{\omega}$ identifies it uniquely as the
Brownian motion $\sqrt{\frac{16}{3}}B_{t\wedge T_{\omega}}.$ Taking
into account Proposition \ref{prop: f_expansion} concludes the proof.
\end{proof}
\begin{rem}
The reader who finds the above proof cryptic may think about it as
taking the Itô derivative of $f_{\H_{t}}=f_{\H,\bcond_{t}}(g_{t}(z))g_{t}'(z)^{\frac{1}{2}}$
by differentiating the expansion (\ref{eq: f_expansion}), which becomes
an expansion in $H_{k}(t,z)$, term by term, and concluding that since
the resulting expression is drift-less, so must be the coefficients
of the expansion. 

We now explaining what happens after the time the interface crosses
all possible cross-cuts $\omega$. Let 
\[
\tau:=\sup\{t:\exists\omega\text{ separating }\text{\ensuremath{\gamma_{[0,t]}^{\H}}}\text{ from }b_{0}^{(2)},\dots,b_{0}^{(2n)},\infty\}.
\]
\end{rem}

\begin{lem}
\label{lem: hitting point}The limit $\gamma_{\tau}:=\lim_{t\to\tau}\gamma_{t}\in\partial\Omega$
exists and we have $\gamma_{\tau}\in(b^{(2)},b^{(2n)})\setminus\{b^{(3)},\dots,b^{(2n-1)}\}$
almost surely. 
\end{lem}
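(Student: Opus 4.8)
The plan is to analyze the behavior of the interface as $t \to \tau$ and show that the limit point $\gamma_\tau$ exists and lands on the prescribed arc. First I would establish \emph{existence} of the limit. By Lemma \ref{lem: tightness}, the curve $\gamma_{[0,N]}$ is (a subsequential limit of) a chordal Loewner chain with continuous driving process in $\H$ up to time $T_\omega$ for each admissible cross-cut $\omega$; moreover by Lemma \ref{lem: local_conv} the driving process $b^{(1)}_t$ satisfies the SDE \eqref{eq: driving_process} up to $T_\omega$. Since $\tau$ is the supremum over all such cross-cuts, the drift coefficient $\partial_{b^{(1)}}\log Z$ stays bounded as long as $b^{(1)}_t$ remains uniformly separated from $b^{(2)}_t,\dots,b^{(2n)}_t$; as $t \to \tau$ these points collide, which is exactly the Loewner-chain signature of the tip $\gamma_t^{\H}$ approaching the real line at a point of $(b_0^{(2)}, b_0^{(2n)})$. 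The clean way to get existence of $\gamma_\tau$ is to transfer back to $\Omega$ and use the tightness from Lemma \ref{lem: tightness}: the full curve $\gamma_{[0,N]}$ is a continuous curve, $\tau$ is precisely the time it ceases to admit a separating cross-cut, and at that instant the curve must touch $\partial\Omega$, so $\gamma_\tau := \lim_{t\to\tau}\gamma_t$ exists in $\cl{\Omega}$ and lies on $\partial\Omega$.

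Next I would localize $\gamma_\tau$ to the open arc $(b^{(2)}, b^{(2n)})$. For topological reasons $\gamma_{[0,\tau]}$ cannot cross the free arc $(b^{(1)}, b^{(2)})$ or the wired arc $(b^{(2n)}, b^{(1)})$ adjacent to the starting point $b^{(1)}$: the exploration interface runs along the wired/free boundary and by construction the discrete interface $\gamma^\delta$ returns to $\partial\Od$ only at one of the $b^{(i,\delta)}$ with $i$ even (this is the index $I$ of Lemma \ref{lem: tightness}). Passing to the limit, $\gamma_\tau$ must be a limit of such endpoints, forcing $\gamma_\tau \in [b^{(2)}, b^{(2n)}]$ with the two boundary arcs touching $b^{(1)}$ excluded. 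I would make this precise using the boundary-regularity assumption \eqref{eq: bdry_reg} together with the Carath\'eodory convergence, which prevents the interface from degenerating into a fjord near the endpoints.

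The remaining, and genuinely probabilistic, task is to exclude the finitely many marked points $b^{(3)}, \dots, b^{(2n-1)}$ as possible limits $\gamma_\tau$. This is where the main obstacle lies: one must show that the interface almost surely does not hit an interior marked point. I would argue this by a Russo--Seymour--Welsh / second-moment estimate on the discrete level, invoking the crossing bounds of \cite{chelkak2016crossing, duminil2011connection}. Concretely, fix a small scale $\rho$ and a marked point $b^{(j)}$ with $3 \le j \le 2n-1$; I would show that the probability that $\gamma^\delta_{[0,N]}$ comes within distance $\rho$ of $b^{(j,\delta)}$ \emph{and} terminates there (rather than passing by and continuing to a nearby even-indexed endpoint) decays as $\rho \to 0$, uniformly in $\delta$. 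The point is that hitting $b^{(j)}$ would require a specific arm configuration near $b^{(j)}$ whose probability is controlled by a crossing exponent strictly between the one-arm half-plane exponent and a higher-multiplicity exponent; the RSW bounds give a power-law upper bound in $\rho$ that is summable enough to conclude, after taking $\rho \to 0$, that the limiting probability of $\gamma_\tau = b^{(j)}$ is zero.

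Finally I would assemble these pieces: existence of $\gamma_\tau \in \partial\Omega$ from tightness and continuity; confinement to the closed arc $[b^{(2)}, b^{(2n)}]$ from the topology of the exploration path and the even parity of the exit index $I$; and exclusion of the endpoints $b^{(2)}, b^{(2n)}$ and of the interior marked points $b^{(3)}, \dots, b^{(2n-1)}$ from the RSW estimate, conformally transported via $\varphi$. Together these give $\gamma_\tau \in (b^{(2)}, b^{(2n)}) \setminus \{b^{(3)}, \dots, b^{(2n-1)}\}$ almost surely. I expect the RSW exclusion of the interior marked points to be the crux, since it requires a quantitative arm estimate rather than a soft topological or convergence argument; the existence and the crude localization should follow comparatively directly from the tightness already established in Lemma \ref{lem: tightness} and the convergence input of Lemma \ref{lem: local_conv}.
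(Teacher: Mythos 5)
Your first two steps are fine and essentially coincide with the paper's: existence of $\gamma_{\tau}$ is immediate from continuity of the curve, and $\gamma_{\tau}\notin\Omega\cup(b^{(2n)},b^{(2)})$ because otherwise $\tau$ would not be the supremum of times admitting a separating cross-cut. The genuine gap is in what you yourself call the crux. The estimate you propose --- that the probability that $\gamma^{\delta}_{[0,N]}$ comes within distance $\rho$ of $b^{(j,\delta)}$ \emph{and terminates there} tends to $0$ as $\rho\to0$ uniformly in $\delta$ --- is simply false whenever $j$ is even: the discrete interface terminates at $b^{(I,\delta)}$ with $I$ even, and every even index (the endpoints $b^{(2)},b^{(2n)}$, and also the interior points $b^{(4)},\dots,b^{(2n-2)}$ as soon as $n\geq3$) is attained with probability bounded below uniformly in $\delta$; by RSW all connection patterns have uniformly positive probability. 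No arm or crossing exponent can make that event small. What the lemma actually requires is a statement about the \emph{first disconnection point}: on the event that $\gamma$ eventually terminates at an even marked point $b^{(j)}$, it must a.s.\ have touched the arc $(b^{(2)},b^{(2n)})$ at a non-marked point strictly before reaching $b^{(j)}$. A proximity-decay estimate cannot capture this, since on that event proximity to $b^{(j)}$ occurs with probability one. Moreover, the marked points are exactly the boundary-condition-change points, where RSW forcing arguments of the type used in Lemma \ref{lem: stability_at_hitting} (whose event $G^{\varepsilon,\delta}$ explicitly requires the nearby boundary to lie on a single arc) break down; you never specify which arm event or which exponent would take their place.

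The paper's resolution is soft and uses the multiple-interface coupling rather than any quantitative estimate. Apply Lemma \ref{lem: local_conv} to the interface $\gamma^{(j)}$ \emph{started at the marked point} $b^{(j)}$: up to hitting a small cross-cut $\omega_{j}$ separating $b^{(j)}$ from the other marked points, its law is absolutely continuous with respect to chordal SLE$_{16/3}$. Since $16/3>4$, that law almost surely visits $\partial\Omega\setminus\{b^{(j)}\}$ and its hull covers a boundary neighborhood of $b^{(j)}$ before hitting $\omega_{j}$. Because $\gamma$ and $\gamma^{(j)}$ are interfaces of the \emph{same} FK configuration, the only way $\gamma$ can reach $b^{(j)}$ is by merging with $\gamma^{(j)}$ and traversing it in reverse order; but then $\gamma$ hits other boundary points (in the swallowed neighborhood) before reaching $b^{(j)}$, so $\gamma_{\tau}\neq b^{(j)}$ almost surely, and the same argument applies at every marked point. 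This single observation --- absolute continuity with respect to SLE$_{16/3}$ for the interface emanating from the marked point, combined with the boundary-covering property of SLE with $\kappa>4$ --- is the idea your proposal is missing, and it replaces the entire RSW scheme you outline.
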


\begin{proof}
The existence of the limit is clear from the fact that $\gamma_{t}$
is a continuous curve. Clearly, we cannot have $\gamma_{\tau}\in\Omega$
or $\gamma_{\tau}\in(b^{(2n)},b^{(2)}),$ since in that case, $\tau$
would not be the supremum. Let $\omega_{2}$ be a cross-cut separating
$b^{(2)}$ from other marked points and infinity, and let $\gamma_{t}^{(2)}$
be the scaling limit of the interface starting from $b^{(2)}.$ Up
to hitting $\omega_{2}$, its law is given by Lemma \ref{lem: local_conv},
in particular, it is absolutely continuous with respect to SLE$_{16/3}.$
This means that $\gamma_{t}^{(2)}$ almost surely visits $\pa\Omega\setminus\left\{ b^{(2)}\right\} $,
and its hull almost surely covers a neighborhood of the boundary of
$b^{(2)}$, before hitting $\omega_{2}$. Therefore, the only way
$\gamma_{t}$ can visit $b^{(2)}$ is by connecting to $\gamma_{t}^{(2)}$
and traversing it in the reverse order, in which case it will hit
another boundary point first. The same argument applies to other marked
points.
\end{proof}
\begin{rem}
The conclusions of Lemma \ref{lem: tightness} imply that the map
$\gamma_{[0,\tau]}^{\H}\mapsto b_{[0,\tau]}^{(1)}$ is continuous
and injective on the support of the distribution of $\gamma_{[0,\tau]}^{\H}$.
Therefore, the law of $b_{[0,\tau]}^{(1)}$, specified by (\ref{eq: driving_process}),
determines uniquely the law of $\gamma_{[0,\tau]}^{\H}.$ A more subtle
technicality is whether the latter determines uniquely the law of
$\gamma_{[0,\tau]}$; the issue is that neither $\varphi$, nor $\varphi^{-1}$,
in general, induces a continuous injective map between the spaces
of curves. This question was answered in the positive in \cite{karrila2018limits}.
If the reader is ready to assume that $\pa\Omega$ is a curve, then
$\varphi^{-1}$ has a continuous extension to the boundary and thus
the map $\gamma_{[0,\tau]}^{\H}\mapsto\gamma_{[0,\tau]}$ is continuous
and injective, and the inverse map is injective, hence the issue disappears.
Note that by SLE duality, the boundary of SLE$_{16/3}$, and thus
of $\gamma_{[0,\tau]},$ is a. s. a curve; therefore, this regularity
assumption passes to the domains $\Omega_{L,R}$ defined below. 
\end{rem}

Lemma \ref{lem: hitting point} implies that for some random $2\leq J<2n$,
the marked points $b^{(2)},\dots,b^{(J)}$ belong to the boundary
of the same connected component $\Omega_{R}$ of $\Omega\setminus\gamma_{[0,\tau]}$,
and $b^{(J+1)},\dots,b^{(2n)}$ belong to the boundary of another
connected component $\Omega_{L}$. Similarly, for the discrete interface
$\gamma_{t}^{\delta}$, if $\tau^{\delta}$ is the first step after
which $b^{(2,\delta)},\dots,b^{(2n,\delta)}$ are not on the boundary
of the same connected component of $\Od_{t}$, there is some $J_{\delta}$
such that $b^{(2,\delta)},\dots,b^{(J_{\delta},\delta)}$ and $b^{(J_{\delta}+1,\delta)},\dots,b^{(2n)}$
are on the boundary of two different connected components $\Omega_{R}^{\delta}$
and $\Omega_{L}^{\delta}$ of $\Od_{\tau^{\delta}}$ respectively.
The domains $\Omega_{L,R}$, and similarly $\Od_{L,R}$, come naturally
equipped with the boundary conditions $\bcond_{L,R}$ that are inherited
from $\Omega$, with the additional change at $\gamma_{\tau}$ in
that of the two domains which contains an odd number of marked points.
We have the following convergence result:
\begin{lem}
\label{lem: stability_at_hitting}Under the coupling where $\gamma^{\delta}\to\gamma$
a. s., we can extract a subsequence $\delta_{k}$ such that, a. s.,
$J_{\delta_{k}}\to J$, $\Omega_{L,R}^{\delta_{k}}\stackrel{\text{Cara}}{\longrightarrow}\Omega_{L,R}$,
and $\bcond_{L,R}^{\delta_{k}}\to\bcond_{L,R}^{\delta_{k}}$, and
the latter approximation satisfies the boundary regularity assumption
(\ref{eq: bdry_reg}).
\end{lem}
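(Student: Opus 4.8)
The plan is to work on the given almost-sure coupling $\gamma^{\delta}\to\gamma$ and to show that, with probability tending to $1$, the discrete interface disconnects the marked points cleanly and inside a small ball around $\gamma_{\tau}$; all four conclusions then follow, and an a.s.-convergent subsequence is extracted by Borel--Cantelli. By Lemma \ref{lem: hitting point}, a.s. $\gamma_{\tau}\in(b^{(2)},b^{(2n)})\setminus\{b^{(3)},\dots,b^{(2n-1)}\}$, so I may fix $\rho>0$ (depending on the realization) with $\dist(\gamma_{\tau},\{b^{(2)},\dots,b^{(2n)}\})>2\rho$ and such that $\partial\Omega\cap B(\gamma_{\tau},2\rho)$ is a single boundary arc carrying no marked point; on this arc the disconnection at time $\tau$ is topologically robust. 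Since separating one group of marked points from another is a macroscopic event, a.s. convergence of the full curves forces any macroscopic disconnection of the $b^{(i,\delta)}$ to occur near $\gamma_{\tau}$ and at a time near $\tau$. The only loophole is a microscopic \emph{fjord}, in which $\gamma^{\delta}$ comes $\eps$-close to $\partial\Omega$ near $\gamma_{\tau}$ while a thin channel of $\Od_{t}$ keeps the two groups connected (or in which such a channel forms at an earlier near-touch); ruling these out is exactly the content of the RSW step.

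The reduction is to the following \emph{clean disconnection} claim: for fixed small $\rho$, and then $\rho\to0$, with probability $\to1$ the interface $\gamma^{\delta}$ disconnects $\{b^{(2,\delta)},\dots,b^{(J_{\delta},\delta)}\}$ from $\{b^{(J_{\delta}+1,\delta)},\dots,b^{(2n,\delta)}\}$ with disconnection point inside $B(\gamma_{\tau},\rho)$, and $\gamma^{\delta}_{[0,\tau^{\delta}]}$ agrees with $\gamma_{[0,\tau]}$ outside $B(\gamma_{\tau},\rho)$ up to $o(1)$ in the sup-metric. Granting this, one has $J_{\delta}=J$ for small $\delta$ on this event, so $J_{\delta}\to J$; the convergence $\gamma^{\delta}_{[0,\tau^{\delta}]}\to\gamma_{[0,\tau]}$ gives Carath\'eodory convergence of the complementary components $\Omega^{\delta}_{L,R}$ to $\Omega_{L,R}$; and since $\bcond_{L,R}$ is determined by the converging marked points together with the cut point $\gamma^{\delta}_{\tau}\to\gamma_{\tau}$ and the inherited wired/free alternation, one obtains $\bcond^{\delta}_{L,R}\to\bcond_{L,R}$.

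The main obstacle, and the heart of the proof, is the RSW ``no long fjord'' estimate underlying the clean disconnection claim. For the two groups of marked points to remain in one component of $\Od_{t}$ while the tip of $\gamma^{\delta}$ is already within distance $\eps$ of $\gamma_{\tau}$, the channel of $\Od_{t}$ joining them must cross the semi-annulus $B(\gamma_{\tau},\rho)\setminus B(\gamma_{\tau},\eps)$; this forces a crossing of a topological quadrilateral of modulus of order $\log(\rho/\eps)$ in the critical FK--Ising configuration (and, dually, in its complement). By the RSW bounds of \cite{chelkak2016crossing} together with \cite{duminil2011connection}, crossing probabilities across such high-modulus quadrilaterals decay as a power of $\eps/\rho$, uniformly in $\delta$. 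Hence, with probability $\to1$ as $\eps/\rho\to0$, once the tip is $\eps$-close the domain is already disconnected: the interface disconnects quickly, with nothing macroscopic happening in between, which pins the disconnection inside $B(\gamma_{\tau},\rho)$ and simultaneously excludes spurious fjords at earlier near-touches, where the continuum curve does not separate the marked points. This is the step that consumes all probabilistic input beyond the a.s. convergence of the curves.

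Finally, the boundary regularity assumption (\ref{eq: bdry_reg}) at the new marked point $\gamma_{\tau}$ follows from the same mechanism: a deep fjord at $\gamma_{\tau}$ in $\Omega^{\delta}_{L,R}$ is again a thin, high-modulus crossing, excluded with high probability by the estimate above, so interfaces emanating from $\gamma^{\delta}_{\tau}$ exit small neighborhoods with small diameter, which is precisely (\ref{eq: bdry_reg}); at the inherited marked points the assumption is retained from the hypotheses on $\Od$. Since every convergence above holds in probability with explicitly small bad-event probabilities, choosing $\delta_{k}\downarrow0$ fast enough that these are summable and applying Borel--Cantelli produces a subsequence along which $J_{\delta_{k}}\to J$, $\Omega^{\delta_{k}}_{L,R}\stackrel{\mathrm{Cara}}{\to}\Omega_{L,R}$ and $\bcond^{\delta_{k}}_{L,R}\to\bcond_{L,R}$ almost surely, with the boundary regularity passing to the limit. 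This completes the plan.
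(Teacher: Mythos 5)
Your proposal has the same skeleton as the paper's proof: reduce everything to a ``clean, quick touchdown near $\gamma_{\tau}$'' statement proved by RSW uniformly in $\delta$, then read off $J_{\delta}\to J$, Carath\'eodory convergence, convergence of the boundary data and the regularity (\ref{eq: bdry_reg}), and extract an a.s.\ subsequence by Borel--Cantelli. However, the one step carrying all the probabilistic content has a genuine gap. You claim that on the escape/fjord event ``the channel \dots forces a crossing of a topological quadrilateral of modulus of order $\log(\rho/\eps)$ in the critical FK--Ising configuration (and, dually, in its complement)'', and that such crossings are unlikely by RSW, uniformly in $\delta$. This is false as stated, because it ignores the boundary conditions carried by the long sides of that quadrilateral. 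The channel between the explored curve and $\pa\Omega$ is bounded by the revealed curve itself and by an arc of $\pa\Od$: every primal vertex on the left of the curve is joined to the wired exterior by revealed open edges, and every dual vertex on its right is joined to the free arc by revealed dual-open edges. Consequently an open lengthwise crossing of (one half of) the channel exists with probability one --- it runs along the curve's own open side --- and a dual-open lengthwise crossing exists along the other side, or along a free boundary arc. So the event you exhibit is not a small-probability event, and no RSW bound, however uniform, makes it one. Whether a genuinely suppressed crossing is forced depends on which side of the curve the escape occurs and on whether the nearby arc of $\pa\Od$ is wired or free; this case analysis is exactly what your argument skips, and in the mixed cases (e.g.\ escape along the curve's free side past a wired arc) neither of your two crossings decays.

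The paper avoids this by arguing in the opposite, ``positive'' direction. At the first time the tip is within $\eps^{2}$ of the boundary it builds, inside the annulus $A_{\eps}=\{2\eps^{2}\leq|z-\gamma_{T}^{\delta}|\leq\eps\}$, two quads $Q_{L}^{\eps,\delta},Q_{R}^{\eps,\delta}$ adjacent to the two sides of the curve, and uses the uniform-in-boundary-conditions RSW of \cite{chelkak2016crossing} plus monotonicity to produce, with probability $q(\eps)\to1$, an open crossing from the wired side of the curve to $\pa\Od$ and a dual-open crossing from the free side of the curve to $\pa\Od$. These are \emph{short-way} crossings --- this is where the two-scale separation $\eps^{2}$ versus $\eps$ enters, a device absent from your single pair of scales $\eps,\rho$ --- and their coexistence traps the continuation of the interface in a pocket whose only exit is $\pa\Od$, forcing it to touch down; a separate annulus argument at scales $\eps$, $\sqrt{\eps}/2$ bounds the diameter of the excursion, which is also what yields (\ref{eq: bdry_reg}) at the new marked point. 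Your deduction of the four conclusions from clean disconnection and your Borel--Cantelli extraction are fine and essentially identical to the paper's, but without replacing your crossing estimate by a trapping argument of this kind (or by a correct case-by-case identification of which separating crossing is suppressed), the proposal does not prove the lemma.
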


\begin{proof}
Let $T_{1}^{\eps,\delta}$ be the first time the discrete interface
$\gamma_{t}^{\delta}$ comes to the distance $\eps^{2}$ from the
boundary arc $(b^{(2,\delta)},b^{(2n,\delta)}),$ and let $\omega^{\eps,\delta}$
denote the cross-cut in $\Od$ formed by the arc of the circle $|z-\gamma_{T_{1}^{\eps,\delta}}^{\delta}|=\eps$
which separates $\gamma_{T_{1}^{\eps,\delta}}^{\delta}$ from $b^{(1,\delta)}$.
Let $G^{\eps,\delta}$ be the event that all points of $\pa\Od$ separated
by $\omega^{\eps,\delta}$ from $b^{(1,\delta)}$ belong to the same
boundary arc $(b^{(i,\delta)},b^{(i+1,\delta)});$ as explained below,
$\lim\inf_{\delta\to0}\P(G^{\eps,\delta})\to1$ as $\eps\to0.$ 

Denote by $T_{2}^{\eps,\delta}$ be the first time after $T_{1}^{\eps,\delta}$
that $\gamma_{t}^{\delta}$ crosses $\omega^{\eps,\delta},$ and let
$E^{\eps,\delta}$ be the event that $\gamma_{[T_{1}^{\eps,\delta},T_{2}^{\eps,\delta}]}^{\delta}$
hits $\pa\Od$, and, moreover, $\mathrm{diam}(\gamma_{[T_{1}^{\eps,\delta},T_{2}^{\eps,\delta}]}^{\delta})\leq\sqrt{\eps}.$
We claim that there is a function $p(\eps)\to1$ as $\eps\to0$ such
that $\ind_{G^{\eps,\delta}}\cdot\P(E^{\eps,\delta}|\gamma_{[0,T^{\eps,\delta}]}^{\delta})\geq p(\eps)$
for all $\delta<\eps^{2}/10.$

\begin{figure}[h]
\includegraphics[width=0.4\textwidth]{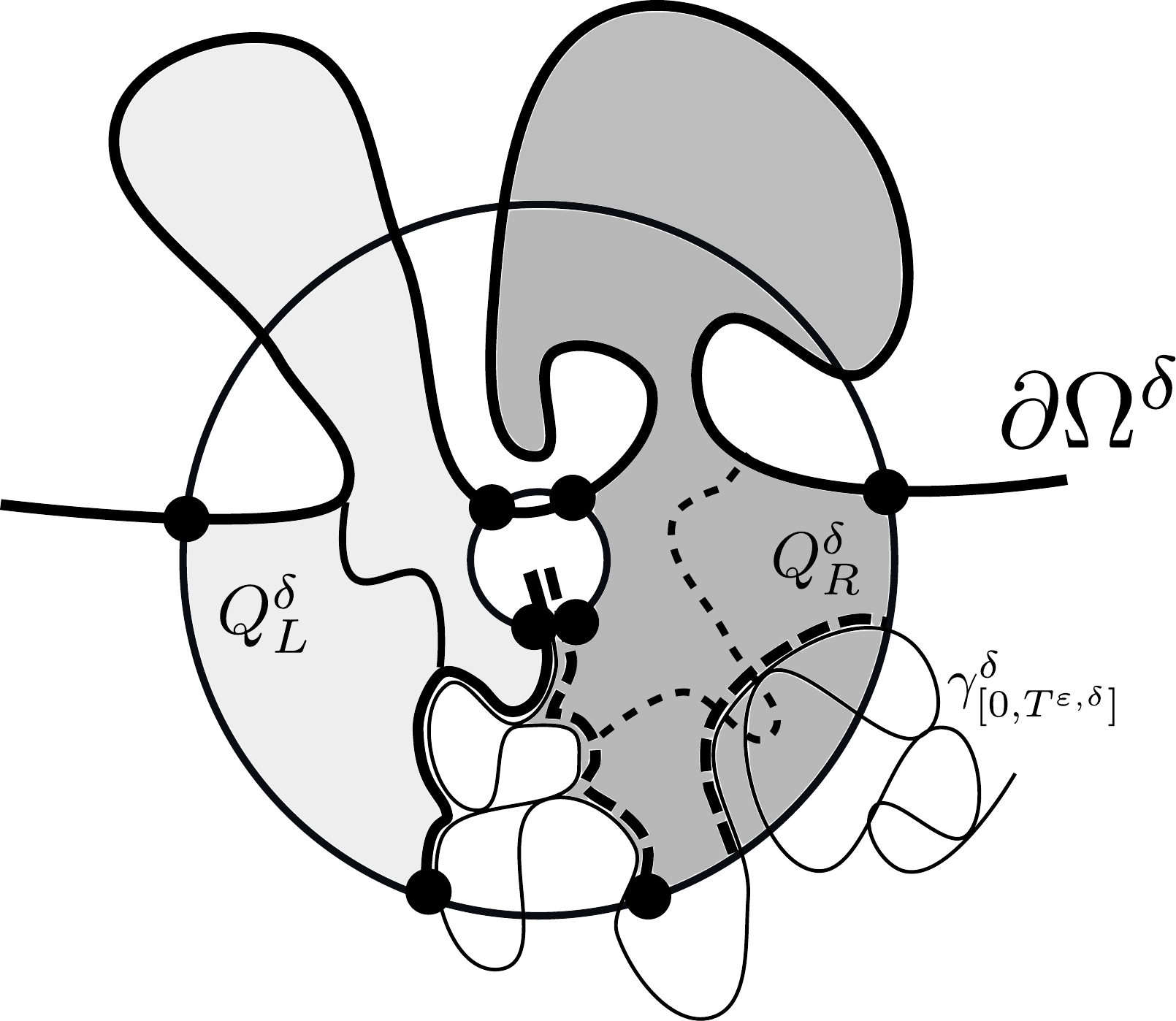}\caption{\label{fig: RSW}The moment $T^{\protect\eps,\delta}$ the curve $\gamma_{t}^{\delta}$
first comes close to the boundary of $\Omega^{\delta}$ away from
the arc $(b^{(2n,\delta)},b^{(1,\delta)}).$ The quads $Q_{L}^{\delta}$
and $Q_{R}^{\delta}$ are in light gray and dark gray respectively.
Their crossing, respectively, dual crossing, as shown in the picture,
forces the event $E^{\protect\eps,\delta}.$}

\end{figure}

Indeed, consider the annulus $A_{\eps}:=\{2\eps^{2}\leq|z-\gamma_{T^{\eps,\delta}}^{\delta}|\leq\eps\}.$
This annulus is crossed by $\pa\Od$ as well as by the interface $\gamma_{[0,T^{\eps,\delta}]}^{\delta},$
which means that it is also crossed by the wired $\partial_{\wired}$
and by the free $\partial_{\free}$ part of $\partial\Od_{T^{\eps,\delta}}\setminus\partial\Od$,
the left-hand side and the right-hand side of $\gamma_{[0,T^{\eps,\delta}]}^{\delta}$
respectively. We now consider the quads $Q_{R,L}^{\eps,\delta}$,
formed by a crossing of $A_{\eps}$ by $\pa_{\free}$ (respectively,
$\pa_{\wired}$) , two arcs of the circles $|z-\gamma_{T^{\eps,\delta}}^{\delta}|=\eps^{2}$
and $|z-\gamma_{T^{\eps,\delta}}^{\delta}|=2\eps^{3}$ up until their
first intersection with $\pa\Omega^{\delta}$, and a part of $\pa\Omega^{\delta}$
between these two intersection points, see Figure \ref{fig: RSW}.
Both quads have large modulus and thus, by RSW, if one puts wired
(respectively, free) boundary conditions on their boundaries, with
probability $q(\eps)\to1$ as $\eps\to0$, they are crossed by a dual
cluster (respectively, by a cluster) separating the arcs of the circles.
Conditionally on $\gamma_{[0,T^{\eps,\delta}]}^{\delta}$, such crossings
are even more likely by monotonicity, since any part of $\partial\Od_{T^{\eps,\delta}}$
that intersects the interior of the $Q_{R,L}^{\eps,\delta}$ is free
(respectively, wired). The coexistence of such crossings implies that
$\gamma_{[T_{1}^{\eps,\delta},T_{2}^{\eps,\delta}]}^{\delta}$ hits
$\pa\Od$, which therefore happens with probability at least $2q(\eps)-q(\eps)^{2}.$ 

For the diameter bound, consider the annulus $A'_{\eps}:=\{\eps\leq|z-\gamma_{T^{\eps,\delta}}^{\delta}|\leq\sqrt{\eps}/2\}.$
If the part of $\pa\Omega^{\delta}$ separated from $b^{(1,\delta)}$
by $\omega^{\eps,\delta}$ does not cross that annulus, then neither
does $\gamma_{[T_{1}^{\eps,\delta},T_{2}^{\eps,\delta}]}^{\delta}$
and we are done. Otherwise, consider the quad formed by arcs of the
inner and outer boundary of $A'_{\eps}$ and the first outward and
last inward crossing of it by $\pa\Od.$ By RSW and monotonicity,
with probability at least $q'(\eps)\to1$ as $\eps\to0$, this quad
is crossed by an open or by a dual-open path, which prevents $\gamma_{[T_{1}^{\eps,\delta},T_{2}^{\eps,\delta}]}^{\delta}$
from crossing $A'_{\eps},$ and thus from having diameter $\ge\sqrt{\eps}.$
All in all, we can take $p(\eps):=q'(\eps)+2q(\eps)-q(\eps)^{2}-1.$

With probability going to $1$ as $\eps\to0$, $\gamma_{[0,\tau]}$
stays at distance at least $3\eps$ from $[b^{(2)},b^{(J)}]\cup[b^{(J+1)},b^{(2n)}].$
On this event, for $\delta$ small enough, $\gamma_{[0,T^{\eps,\delta}]}^{\delta}$
stays at distance at least $2\eps$ from $[b^{(2,\delta)},b^{(J,\delta)}]\cup[b^{(J+1,\delta)},b^{(2n,\delta)}]$,
hence $G^{\delta,\eps}$ holds, and $E^{\eps,\delta}$ implies $J^{\delta}=J.$
Therefore, $J^{\delta}\to J$ in probability, and thus almost surely
along a subsequence. 

Any point $w$ of $\pa\Omega_{L}$ is either a point of $\pa\Omega$,
or a point of $\gamma$; in either case there is a sequence of points
$w^{\delta}\in\pa\Od\cup\gamma^{\delta}$ that approximates $w.$
We can find sequences $\eps_{k}$ and $\delta_{k}<\eps_{k}^{2}/10$
such that $\P(G^{\eps_{k},\delta_{k}})\geq1-2^{-k}$ and $p(\eps_{k})\geq1-2^{-k}$;
then, Borel--Cantelli and the above argument ensures that almost
surely, all but finitely many of $E^{\eps_{k},\delta_{k}}$ happen.
But if $w\in\Omega_{L}$, $\dist(w,\pa\Omega_{L})\geq2\eps$ and $B_{\eps}(w)\nsubseteq\Omega_{L}^{\delta_{k}},$
then, for $k$ large enough, this means that $E^{\eps_{k},\delta_{k}}$
has failed. Therefore, almost surely, for all $w\subset\Omega_{L}$
with rational coordinates, $B_{\dist(w;\pa\Omega_{L})/2}(w)\subset\Omega_{L}^{\delta_{k}}$
for $k$ large enough. This implies $\Omega_{L}^{\delta_{k}}\stackrel{\text{Cara}}{\longrightarrow}\Omega_{L};$
same argument applies to $\Omega_{R}^{\delta_{k}}.$ On $E^{\eps_{k},\delta_{k}},$the
cross-cut $\omega^{\eps_{k},\delta_{k}}$ actually separates the newly
created marked point in $\Od_{L}$ (if $J$ is even) or $\Od_{R}$
(if $J$ is odd) from the other marked points, and converges to $\gamma_{\tau}$,
which shows $\bcond_{L,R}^{\delta}\to\bcond_{L,R}.$ For the regularity
claim, which we also need to check only for the new marked point,
we can take $U^{\eps,\delta_{k}}$ to be the part of $\Od_{L}$ or
$\Od_{R}$ separated by the cross-cut $\omega^{\eps,\delta_{k}}$
from other marked points. The above argument shows that (\ref{eq: bdry_reg})
holds with $\eps'=\max(\sqrt{\eps},q'(\eps)).$ 
\end{proof}
Lemma \ref{lem: hitting point} allows for the following inductive
definition: 
\begin{defn}
\label{def: global_msle}The (global) multiple SLE$_{16/3}$ in $\Omega$
is a random collection of curves $\gamma^{(1)},\gamma^{(3)}\dots,\gamma^{(2n-1)}$
connecting $b^{(1)},\dots,b^{(2n-1)}\in\pa\Omega$ to $b^{(2\sigma(1))},\dots,b^{(2\sigma(n))}\in\pa\Omega$
respectively, for some random permutation $\sigma$ of $\{1,\dots,n\}$,
defined by the following properties: 
\end{defn}

\begin{itemize}
\item the marginal law of of the curve $\gamma_{[0,\tau]}^{(1)}$ started
from $b^{(1)}$ is given by the chordal Loewner evolution with the
driving process (\ref{eq: driving_process}). 
\item conditionally on $\gamma_{[0,\tau]}^{(1)}$, the law of ($\gamma^{(1)},\gamma^{(3)}\dots,\gamma^{(2n-1)}$)
is given by independent multiple SLE$_{16/3}$ processes in $\Omega_{L,R}$
in which the curve starting from $\gamma_{\tau}$ is concatenated
with $\gamma_{[0,\tau]}^{(1)}$. 
\item the base case $n=1$ is given by the chordal SLE$_{16/3}.$ 
\end{itemize}
\begin{proof}[Proof of Theorem \ref{thm: main} and Corollary \ref{cor: connection_prob}]
 Lemma \ref{lem: local_conv} and Lemma \ref{lem: stability_at_hitting}
ensure that the marginal distribution of $\gamma_{[0,\tau^{\delta}]}^{(1,\delta)}$
converges to that of $\gamma_{[0,\tau]}^{(1)}$. The full interface
$\gamma_{[0,N]}^{(1,\delta)}$ consist of $\gamma_{[0,\tau^{\delta}]}^{(1,\delta)}$,
the part from $\tau^{\delta}$ until $\gamma_{t}^{(1,\delta)}$ re-enters
$\Od_{L}$ (if $J$ is odd) or $\Od_{R}$ (if J is even), and then
the part of the interface in $\Od_{L}$ or $\Od_{R}$. The proof of
Lemma \ref{lem: stability_at_hitting} ensures that the diameter of
the middle part goes to zero in probability. Thus, the domain Markov
property, Lemma \ref{lem: stability_at_hitting} and the induction
hypothesis imply that the conditional distribution of $(\gamma^{(1,\delta)}$,$\dots$,$\gamma^{(2n-1,\delta)})$
given $\gamma_{[0,\tau^{\delta}]}^{(1,\delta)}$ converges to the
conditional distribution of $(\gamma^{(1)}$,$\dots$,$\gamma^{(2n-1)})$
given $\gamma_{[0,\tau]}^{(1)}$, at least along a subsequence $\delta_{k}.$
Hence the full law of $(\gamma^{(1,\delta)},\dots,\gamma^{(2n-1,\delta)})$
converges to that of $(\gamma^{(1)},\gamma^{(3)}\dots,\gamma^{(2n-1)})$
along $\delta_{k}$. Since such a subsequence can be extracted from
any sequence of $\delta\to0$, no extraction is in fact needed. The
random variable $\ind_{\pi^{\delta}=\pi}$ is a continuous function
of the collection if interfaces and the latter has a conformally invariant
scaling limit, therefore, Corollary \ref{cor: connection_prob} also
follows.
\end{proof}
\bibliographystyle{plain}
\bibliography{Mixedcorr}

\end{document}